\documentclass[11pt]{article}

\usepackage[margin=1in]{geometry}
\usepackage{amsmath, amssymb, amsthm}
\usepackage{booktabs}
\usepackage{siunitx}
\usepackage{bm}
\usepackage{graphicx}
\usepackage{float}
\usepackage{subcaption}  
\usepackage{placeins}
\usepackage{setspace}
\usepackage{natbib}  
\usepackage{hyperref}
\usepackage{url}

\hypersetup{
  colorlinks=true,
  linkcolor=blue,
  citecolor=blue,
  urlcolor=blue
}
\newtheorem{theorem}{Theorem}

\newtheorem{lemma}{Lemma}
\theoremstyle{remark}
\newtheorem{remark}{Remark}

\title{\textbf{Logistic Regression Equivalent Weights for Survey Inference: Construction and Asymptotic Properties}}
\author{Seonghun Lee\\
Columbia University}

\date{August 20, 2026}

\begin{document}
\maketitle

\addcontentsline{toc}{section}{Abstract}
\tableofcontents

\begin{abstract}
Equivalent weights from regression models provide a bridge between
design-based and model-based survey inference. We develop a frequentist
survey-weighting framework for logistic regression equivalent weights
under categorical poststratification. Because the logistic model-based
population estimator is nonlinear in the observed outcomes, we define
equivalent weights through a local first-order representation, with each
weight given by the scaled derivative of the population estimator with
respect to the corresponding observed outcome. This construction yields
an explicit closed-form expression in terms of the fitted logistic model
and population cell structure. We establish convergence of the effective
sample size ratio, asymptotic linearity of the resulting weighted
estimator, and a consistent plug-in variance estimator under a finite-cell
superpopulation framework. Simulation results examine population
recovery, regression performance, effective sample size, and the
finite-sample behavior of the asymptotic variance approximation. The
results show that logistic equivalent weighting provides competitive
finite-sample point estimation and approximately valid plug-in variance
estimation, effective sample size is competitive with alternative model-based weighting in the settings considered. An
application to the Future of Families and Child Wellbeing Study illustrates
the construction and use of the proposed weights and variance estimator
in an empirical survey setting. We emphasize the relationship between
logistic equivalent weighting and recent work on locally equivalent
weights for nonlinear multilevel regression and poststratification, while
focusing specifically on the frequentist interpretation of the weights,
their survey-weight properties, and their use for inference.
\end{abstract}

\doublespacing


\section{Introduction}

Survey weights are central to making sample data representative of target
populations. Constructing base weights or adjusting them involves multiple
steps and choices, often relying on auxiliary information to align
estimates with census counts or other population benchmarks. Survey data
also frequently face missingness and nonresponse, which can complicate
these procedures and contribute to unstable or highly variable weights.
These challenges motivate alternative approaches that use available
auxiliary information more directly in constructing population estimates.

A substantial body of literature addresses nonresponse adjustment, design- and model-based inference, and small area estimation (Chapman et al., \citeyear{Chapman1986};
 Little, \citeyear{Little1986};
 Chu and Goldman, \citeyear{ChuGoldman1997};
 Lu and Gelman, \citeyear{LuGelman2003};
 Rao and Molina, \citeyear{RaoMolina2015};
 Haziza and Beaumont, \citeyear{HazizaBeaumont2017};
 Skinner and Wakefield, \citeyear{SkinnerWakefield2017};
 Chen et al., \citeyear{ChenEtAl2017};
 Liu and Fan, \citeyear{LiuFan2023}). At the same time, concerns remain regarding the construction of base weights, particularly when detailed population information is unavailable or confidential. The role of weights in regression is also an active methodological question: weighting does not automatically reduce bias, and calibration procedures such as raking can produce highly variable weights (Little and Rubin, \citeyear{LittleRubin1987};
 Deville and Särndal, \citeyear{DevilleSarndal1992};
 Bethlehem et al., \citeyear{Bethlehem2011}).

Traditional approaches to constructing base weights follow design-based
principles (Chu and Goldman, \citeyear{ChuGoldman1997};
 Valliant et al., \citeyear{Valliant2013};
 Valliant and Dever, \citeyear{ValliantDever2018}). Model-based alternatives include regression-based
equivalent-weight constructions, most notably the framework discussed
by Gelman (\citeyear{Gelman2007}). For linear models, the poststratified model-based
population estimator can be written exactly as a weighted average of the
observed outcomes. For nonlinear models, however, the fitted model
coefficients depend on the observed outcomes, so an exact global
representation as a weighted sample mean with fixed weights is generally
unavailable. A local first-order representation can instead characterize
the sensitivity of the model-based population estimator to the observed
outcomes and thereby define locally equivalent weights.

Recent work by Giordano et al.~(\citeyear{GiordanoEtAl2026}) develops locally equivalent weights
for multilevel regression and poststratification (MrP), including
nonlinear outcome models such as logistic regression. Their framework
derives an asymptotic closed-form expression for logistic locally
equivalent weights from a Bayesian MrP formulation using
Bernstein--von Mises and delta-method approximations. Our derivation
proceeds independently from a frequentist perspective, starting from
the logistic estimating equations. We implicitly differentiate the
logistic score equation with respect to the observed outcomes to obtain
the sensitivity of the fitted coefficient vector, and then apply the
chain rule to the model-based population estimator. Under categorical
poststratification, this leads to the corresponding closed-form
weighting expression. Our focus, however, is on the interpretation of
these weights as local sensitivity measures and on their use as survey
weights for frequentist inference. We therefore study their effective
sample size, asymptotic linearity, plug-in variance estimation, and
finite-sample behavior.

The goal of this paper is to develop and evaluate logistic regression
equivalent weights as a bridge between nonlinear model-based population
estimation and survey-weight inference. We first define the equivalent
weights as the scaled first derivatives of the logistic population
estimator with respect to the observed outcomes and derive their
closed-form expression under categorical poststratification. This local
representation makes it possible to study the nonlinear estimator using
survey-weight concepts such as effective sample size and weight
concentration, while also providing a direct route to frequentist
inference through asymptotic linearization. We establish convergence of
the ESS ratio to a population limit and derive an asymptotic linear
representation with a consistent plug-in variance estimator under the
stated regularity conditions.

We then compare design-based, linear, and logistic weighting through
simulation and an empirical application. The simulation results show
that the logistic estimator provides competitive finite-sample
population-mean estimation, although it does not uniformly outperform
the linear or design-based alternatives. The results also demonstrate
favorable effective-sample-size behavior and approximately nominal
finite-sample performance of the proposed plug-in variance estimator.
The empirical application illustrates how the logistic equivalent
weights and associated variance estimator can be used for survey
inference with a single dataset.

The paper is organized as follows. Section~1 describes the target study
population, sampling design, and design-based weighting approach.
Section~2 develops the model-based weighting methods, including the
linear and logistic equivalent-weight constructions and the local
interpretation of the logistic weights. Section~3 establishes the
asymptotic framework and theoretical results. Section~4 presents the
simulation study and the application to the Fragile Families and Child
Wellbeing Study. Section~5 discusses the key findings, limitations,
relationship to recent work, and directions for future research.
Appendix~A provides a supplementary one-unit perturbation analysis,
while Appendix~B contains proofs of the theoretical results.

\subsection{Target Study Population and Design}

The target survey population consists of live births occurring in large U.S.\ cities with populations over 200,000 between 1998 and 2000. This focus is motivated by the process of constructing base weights for the Future of Families and Child Wellbeing Study (FFCWS). The FFCWS sample follows a stratified multistage design with 4,898 children, oversampling births to unmarried mothers at a ratio of 3 to 1, with the inclusion of a large number of Black, Hispanic, and low-income families (Reichman et al., \citeyear{ReichmanEtAl2001}). Follow-up interviews were conducted across seven waves, when children were approximately ages 1, 3, 5, 9, 15, and 22. In constructing the FFCWS weights, four demographic variables were used for poststratification, and geographic information was incorporated to estimate population birth counts using the Centers for Disease Control and Prevention (CDC) annual natality data. These variables are mother’s marital status, race/ethnicity, age, education, and city of birth (Carlson, \citeyear{Carlson2008}). To mirror the FFCWS study, we generate data and construct base weights in our simulation analysis that replicate the characteristics of the FFCWS design.

In large national surveys, stratified multistage cluster sampling (SMCS) is a common approach because it balances logistical challenges with the need to obtain sufficiently precise estimates for key subgroups. In this context, clusters, also referred to as primary sampling units (PSUs), are often used to simplify fieldwork. For example, the stratum is the city and the PSU is the hospital for cities selected with certainty in the FFCWS study. A simpler alternative is stratified simple random sampling (SRS), in which each unit in the population is assigned a group label and SRS is conducted within each group. These groups typically correspond to demographic strata or geographic regions. 

In our simulation study, we choose to construct design-based weights using stratified simple random sampling (SRS) rather than stratified multistage cluster sampling (SMCS). While SMCS is widely used, intra-cluster correlation reduces efficiency by limiting the independent information gained from additional sampled units within the same cluster. This leads to higher standard errors and lower overall precision. Moreover, we expect minimal differences in weight estimation between SMCS and SRS once raking is applied. For our experiments, we draw samples of size $n = 3000$ from a finite population of size $N = 1{,}000{,}000$, chosen to mirror realistic survey conditions where the sample represents only a small fraction of the population.

\subsection{Design-based Weighting}

For a finite population and a probability sample, each unit $i$ has a known selection probability $\pi_i > 0$, and the \emph{base weight} is defined as
\[
w_i = \frac{1}{\pi_i}.
\]
A probability sample is realized under the following conditions: the set of all possible samples that can be selected from the finite population is well-defined under the sampling design, and each possible sample is associated with a known probability of selection. Moreover, every unit in the target population has a nonzero probability of being selected, with selection occurring via a random mechanism (Valliant and Dever, 2018). In our simulation study, the selection probability $\pi_i$ is determined based on factors such as city of birth and demographic characteristics. These factors are used to form cells, where units within the same cell share identical probabilities of selection.

We then apply two standard adjustments: (i) nonresponse adjustment, where weights are multiplied by the inverse of the weighted response rate within adjustment cells (e.g., within each city) to account for survey nonrespondents; and (ii) raking, where weights are adjusted so that weighted sample counts align with known population margins. Nonresponse adjustments renormalize the weights to sum to the total population size, while raking aligns marginal totals without requiring full cross-classification, thereby avoiding the problem of empty cells. Given the extensive literature on nonresponse adjustments, we do not explore them in detail here. For recent applications to the FFCWS data, including a state-of-the-art two-stage approach using an optimally balanced Gaussian process, see (Vegetabile et al., \citeyear{Vegetabile2020}), refer to Lee and Gelman (\citeyear{LeeGelman2024}).
Let $X$ denote the poststratification variables whose joint distribution in the population is known, together with the survey response of interest $y$ that we aim to estimate for the population. The possible strata of $X$ form poststratification cells, denoted by $s$, with population sizes $N_s$ and sample sizes $n_s$. The total population size is $N=\sum_{s=1}^S N_s$ and the total sample size is $n=\sum_{s=1}^S n_s$.

The population mean of $y$ is
\[
\theta = \frac{\sum_{s=1}^S N_s\,\theta_s}{N},
\]
where $\theta_s$ is the population mean of $y$ within cell $s$. The corresponding sample-based poststratified estimator is
\[
\hat{\theta}^{PS}
=
\frac{\sum_{s=1}^S N_s\,\hat{\theta}_s}{N}.
\]

For a probability sample, the Horvitz--Thompson (HT) estimator is
\[
\bar y_{HT}=\frac{\sum_{i=1}^n w_i y_i}{N},
\]
where $w_i=1/\pi_i$ is the design weight. The Hájek estimator is
\[
\bar y_H=\frac{\sum_{i=1}^n w_i y_i}{\hat N},
\qquad
\hat N=\sum_{i=1}^n w_i.
\]
When the weights are normalized so that $\hat N=N$, the two expressions coincide algebraically. We use this normalization when comparing weighted estimators on the population-mean scale. Exact unbiasedness is a property of the Horvitz--Thompson estimator under the probability-sampling design; the Hájek estimator is generally a ratio estimator and need not be exactly unbiased in finite samples.

The poststratified estimator can be written as a weighted estimator when the cell-specific estimates $\hat\theta_s$ are represented through a regression or other model. We use this relationship to construct equivalent unit weights for the model-based methods below.

\section{Model-based Weighting Methods}

\subsection{Linear Regression Weight}

Let $X$ denote the $n \times k$ design matrix of auxiliary variables for the samples, 
and let $X^{pop}$ denote the $S \times k$ design matrix for the population poststratification 
cells. Here, $n$ is the sample size, $S$ is the number of poststratification cells, and 
$k$ is the number of auxiliary variables (e.g., demographic or geographic characteristics) 
used for weighting. If the survey response of interest $y$ has a linear relationship with the raking variables, then
\[
\hat{\theta}_s = X^{pop}_s \hat{\beta}, 
\qquad \hat{\beta} = (X^\top X)^{-1}X^\top y,
\]
where $\hat{\beta}$ are estimated regression coefficients. The poststratified estimate of the population mean is
\[
\hat{\theta}^{PS} 
= \frac{1}{N}\sum_{s=1}^S N_s \big( X^{pop}_s \hat{\beta} \big) 
= \frac{1}{N}(N^{pop})^\top X^{pop}(X^\top X)^{-1}X^\top y.
\]

Define $\hat{w}$ as the equivalent unit weights such that
\[
\hat{\theta}^{PS} = \frac{1}{n}\sum_{i=1}^n \hat{w}_i y_i, 
\qquad \sum_{i=1}^n \hat{w}_i = n.
\]
Solving for $\hat{w}$ gives
\[
\hat{w} = \frac{n}{N}(N^{pop})^\top X^{pop}(X^\top X)^{-1}X^\top.
\]

The outcome $y$ cancels out, so weights depend only on auxiliary variables. This is an advantage, as demographic and geographic information is typically available with minimal missingness. The formula produces $S$ unique equivalent unit weights, one for each poststratification cell. For linear unit weights, renormalization to sum to $n$ is unnecessary unless negative weights occur. Moreoever, multiplying the weights by any constant does not affect the estimation of the population mean. Thus, the weights can be renormalized to sum to the known population size, ensuring that 
$\sum_{i=1}^n \hat{w}_i = \hat{N}$. After this renormalization, the population estimate is 
\[
\hat{\theta}^{PS} = \frac{\sum_{i=1}^n \hat{w}_i y_i}{\hat{N}},
\]
,which corresponds to a form of the H\'ajek estimator.

\subsection{Logistic Regression Equivalent Weights}
\label{sec:logistic-equivalent-weights}

Let $y=(y_1,\ldots,y_n)^\top$ be a binary response vector and let
$X$ be the $n\times k$ matrix of weighting variables. We consider the
logistic regression model
\[
\Pr(y_i=1\mid X_i)=p_i,
\qquad
\operatorname{logit}(p_i)=X_i^\top\beta,
\]
with fitted probabilities
\[
\hat p_i=\sigma(X_i^\top\hat\beta),
\qquad
\sigma(t)=\frac{1}{1+e^{-t}}.
\]
For the population poststratification cells
$X_s^{\mathrm{pop}}$, $s=1,\ldots,S$, define the model-based population
estimator
\begin{equation}
\hat\theta^{LR}(y)
=
\frac{1}{N}
\sum_{s=1}^S
N_s
\sigma\!\left(
X_s^{\mathrm{pop}\top}\hat\beta(y)
\right),
\qquad
N=\sum_{s=1}^S N_s.
\label{eq:logit-pop-estimator}
\end{equation}
Here, $\hat\beta(y)$ emphasizes that the fitted logistic coefficient
vector is a function of the observed response vector.

For a linear regression model, the corresponding model-based population
estimator is linear in the observed outcomes and can therefore be written
exactly as a weighted sample mean. The logistic population estimator in
\eqref{eq:logit-pop-estimator}, however, is nonlinear in $y$ because
the fitted coefficient vector $\hat\beta(y)$ depends on the outcomes.
Consequently, we do not seek an exact global representation of the form
\[
\hat\theta^{LR}(y)
=
\frac{1}{n}\sum_{i=1}^n w_i y_i
\]
with weights that remain fixed as $y$ varies. Instead, we construct
weights that make a weighted sample mean locally equivalent to the
logistic population estimator at the observed data.

To formalize this local equivalence, let $y_0$ denote the observed
response vector and consider the path
\[
y(t)=y_0+t h,
\]
where $h\in\mathbb{R}^n$ is an arbitrary direction. The scalar $t$ is
only a mathematical device for defining a derivative; no actual
perturbation of the observed binary outcomes is required. We evaluate
the resulting derivative at $t=0$, corresponding to the observed data
$y_0$.

First, a first-order Taylor expansion of the logistic population
estimator around the observed data gives
\begin{equation}
\hat\theta^{LR}(y_0+t h)
=
\hat\theta^{LR}(y_0)
+
t
\nabla_y\hat\theta^{LR}(y_0)^\top h
+
o(t),
\qquad t\to0.
\label{eq:logit-taylor}
\end{equation}
Thus, the first-order change in the logistic population estimator is
\begin{equation}
\Delta\hat\theta^{LR}
=
t
\nabla_y\hat\theta^{LR}(y_0)^\top h
+
o(t).
\label{eq:logit-taylor-change}
\end{equation}

Next, consider an ordinary weighted sample mean with a weight vector
$w$ evaluated at the observed data $y_0$ and held fixed along the local
perturbation:
\[
T_w(y)
=
\frac{1}{n}w^\top y.
\]
Because $T_w(y)$ is linear in $y$, its change along the same path is
exactly
\begin{equation}
T_w(y_0+t h)-T_w(y_0)
=
\frac{t}{n}w^\top h.
\label{eq:weighted-taylor}
\end{equation}

We define the local equivalent weights by requiring these two first-order
changes to agree for every direction $h$:
\begin{equation}
\nabla_y\hat\theta^{LR}(y_0)^\top h
=
\frac{1}{n}w^\top h
\qquad
\text{for all }h.
\label{eq:local-equivalence-condition}
\end{equation}
Because the equality in \eqref{eq:local-equivalence-condition} holds
for every direction $h$, the corresponding coefficient vectors must be
equal. Hence, the local equivalent weights are defined by
\begin{equation}
\boxed{
\hat w
=
n
\left(
\frac{\partial\hat\theta^{LR}}
{\partial y^\top}
\right)^\top
\bigg|_{y=y_0}.
}
\label{eq:logit-local-definition}
\end{equation}
Equivalently, for observation $i$,
\begin{equation}
\boxed{
\hat w_i
=
n
\frac{\partial\hat\theta^{LR}}
{\partial y_i}
\bigg|_{y=y_0}.
}
\label{eq:logit-individual-weight}
\end{equation}

Thus, the proposed weights are not obtained by assuming that the weights
are globally independent of the outcomes. Rather, they are defined as
the coefficients that make a fixed-weight linear estimator have the same
first-order sensitivity as the nonlinear logistic population estimator
at the observed data.

We next derive the gradient in \eqref{eq:logit-local-definition}.

Define the sample and population logistic variance matrices by
\[
W(\beta)
=
\operatorname{diag}
\left\{
\sigma(X\beta)\odot[1-\sigma(X\beta)]
\right\}
\]
and
\[
W^{\mathrm{pop}}(\beta)
=
\operatorname{diag}
\left\{
\sigma(X^{\mathrm{pop}}\beta)
\odot
[1-\sigma(X^{\mathrm{pop}}\beta)]
\right\}.
\]
Let
\[
N^{\mathrm{pop}}
=
(N_1,\ldots,N_S)^\top.
\]

The logistic regression estimator satisfies the score equation
\[
X^\top(y-\hat p)=0,
\qquad
\hat p=\sigma(X\hat\beta).
\]
Because $\hat\beta$ is implicitly determined by this equation, implicit
differentiation with respect to $y$ gives
\begin{equation}
\frac{\partial\hat\beta}
{\partial y^\top}
=
(X^\top W(\hat\beta)X)^{-1}X^\top.
\label{eq:dbetady-main}
\end{equation}

On the other hand, differentiating the model-based population estimator
in \eqref{eq:logit-pop-estimator} with respect to $\hat\beta$ gives
\begin{equation}
\frac{\partial\hat\theta^{LR}}
{\partial\hat\beta^\top}
=
\frac{1}{N}
(N^{\mathrm{pop}})^\top
W^{\mathrm{pop}}(\hat\beta)
X^{\mathrm{pop}}.
\label{eq:dthetadb-main}
\end{equation}

Applying the chain rule,
\begin{equation}
\frac{\partial\hat\theta^{LR}}
{\partial y^\top}
=
\frac{\partial\hat\theta^{LR}}
{\partial\hat\beta^\top}
\frac{\partial\hat\beta}
{\partial y^\top}.
\end{equation}
Substituting \eqref{eq:dbetady-main} and
\eqref{eq:dthetadb-main} yields
\begin{equation}
\frac{\partial\hat\theta^{LR}}
{\partial y^\top}
=
\frac{1}{N}
(N^{\mathrm{pop}})^\top
W^{\mathrm{pop}}(\hat\beta)
X^{\mathrm{pop}}
(X^\top W(\hat\beta)X)^{-1}
X^\top.
\label{eq:logit-gradient}
\end{equation}

Combining \eqref{eq:logit-gradient} with the definition of the local
equivalent weights in \eqref{eq:logit-local-definition} gives the
closed-form expression
\begin{equation}
\boxed{
\hat w(\hat\beta)
=
\frac{n}{N}
X
(X^\top W(\hat\beta)X)^{-1}
(X^{\mathrm{pop}})^\top
W^{\mathrm{pop}}(\hat\beta)
N^{\mathrm{pop}}.
}
\label{eq:logit-weights}
\end{equation}

The resulting weights therefore satisfy the local first-order
representation
\begin{equation}
\hat\theta^{LR}(y_0+t h)
=
\hat\theta^{LR}(y_0)
+
\frac{t}{n}\hat w^\top h
+
o(t),
\qquad
t\to0.
\label{eq:logit-local-expansion}
\end{equation}
Equivalently, for a general infinitesimal perturbation $\Delta y$,
\begin{equation}
\hat\theta^{LR}(y_0+\Delta y)
=
\hat\theta^{LR}(y_0)
+
\frac{1}{n}\hat w^\top\Delta y
+
o(\|\Delta y\|).
\label{eq:logit-local-expansion-delta}
\end{equation}
Thus, $\hat w_i/n$ is the local sensitivity of the model-based
population estimator to observation $i$'s outcome, evaluated at the
observed data. The vector $\hat w$ provides a local equivalent-weight
representation of the nonlinear logistic population estimator, rather
than an exact global representation as a weighted sample mean.

Because the population is represented by $S$ poststratification cells,
the resulting weights depend on the weighting variables and the fitted
logistic model. In particular, sampled units with identical
weighting-variable patterns receive the same local equivalent weight,
so there are at most $S$ distinct raw equivalent weights.

For estimation of a population mean, the weights are invariant to a
common multiplicative rescaling when used in a normalized weighted mean.
We therefore normalize the raw equivalent weights to sum to $n$ when
reporting equivalent unit weights. This normalization does not alter the
corresponding normalized weighted mean or the effective sample size.

\subsection{Interpretation of Local Equivalent Weights}
\label{sec:interpretation-weights}

The local equivalent weights have a direct interpretation in terms of the
sensitivity of the model-based population estimator to perturbations in the
observed outcomes. By construction,
\begin{equation}
\hat w_i
=
n
\frac{\partial \hat\theta_{LR}}
{\partial y_i},
\label{eq:weight-sensitivity}
\end{equation}
where the derivative is evaluated at the observed outcome vector. Thus, for
a small perturbation $\Delta y_i$ to observation $i$, holding the remaining
outcomes fixed,
\begin{equation}
\Delta\hat\theta_{LR}
\approx
\frac{\hat w_i}{n}\Delta y_i.
\label{eq:individual-sensitivity}
\end{equation}
More generally, for a perturbation $\Delta y$ to the entire outcome vector,
the first-order expansion is
\begin{equation}
\hat\theta_{LR}(y+\Delta y)
=
\hat\theta_{LR}(y)
+
\frac{1}{n}\hat w^{\top}\Delta y
+
o(\|\Delta y\|).
\label{eq:local-equivalence}
\end{equation}
Thus, $\hat w_i$ measures the local sensitivity of the model-based
population estimator to observation $i$'s outcome, while the sign determines
the direction of that sensitivity.

A positive equivalent weight indicates that an increase in $y_i$ locally
increases the estimated population quantity. Conversely, a negative
equivalent weight indicates that an increase in $y_i$ locally decreases the
estimated population quantity. For a binary outcome, a positive weight
therefore implies that changing $y_i$ from $0$ to $1$ locally increases the
estimated population prevalence, whereas a negative weight implies that the
same change locally decreases the estimated population prevalence.

This interpretation differs fundamentally from that of conventional design
or inverse-probability weights. A negative equivalent weight does not
represent a negative sampling probability or a negative number of population
units represented by an observation. Rather, it reflects the direction of
the observation's local contribution to the model-based population prediction
through the fitted regression model.

The possibility of negative weights follows directly from the closed-form
expression in \eqref{eq:logit-weights}. For observation $i$,
\begin{equation}
\hat w_i
=
\frac{n}{N}
x_i^{\top}
(X^{\top}WX)^{-1}
X_{\mathrm{pop}}^{\top}
W_{\mathrm{pop}}
N_{\mathrm{pop}}.
\label{eq:individual-weight-form}
\end{equation}
The first factor describes how a perturbation to observation $i$ affects the
fitted logistic coefficients, while the second describes how changes in
those coefficients affect the population-level predicted outcome. These two
effects need not have the same direction. Consequently, their combined
effect can be positive or negative.

Negative values are therefore not, by themselves, a mathematical failure of
the local equivalent-weight construction. They are a natural consequence of
using regression sensitivity to represent a nonlinear model-based estimator.
However, negative or highly variable weights can create practical
difficulties when the resulting weights are subsequently used as
conventional survey weights, particularly for procedures such as raking that
require strictly positive starting weights.

The magnitude of the weights also has a direct interpretation. Large
absolute values of $\hat w_i$ indicate that the model-based population
estimator is highly sensitive, locally, to the outcome of observation $i$.
Accordingly, the distribution of the raw local equivalent weights provides
information about the stability and concentration of the model-based
population estimator. We therefore examine both the effective sample size
and the occurrence of negative or highly variable weights in the simulation
study.

The implications of this interpretation are illustrated by the empirical
distribution of the weights in the simulation study. Figure~\ref{fig:weight-distributions}
shows the distributions of the design-based and model-based weights across
the 300 simulated samples. The raw linear and logistic equivalent weights
exhibit substantially different distributions from the design-based weights,
including greater concentration near zero, heavier right tails, and
occasional negative values. These features reflect the fact that the
model-based weights are local sensitivity coefficients rather than
inverse-probability weights.

\begin{figure}[htbp]
    \centering
    \includegraphics[width=\textwidth]{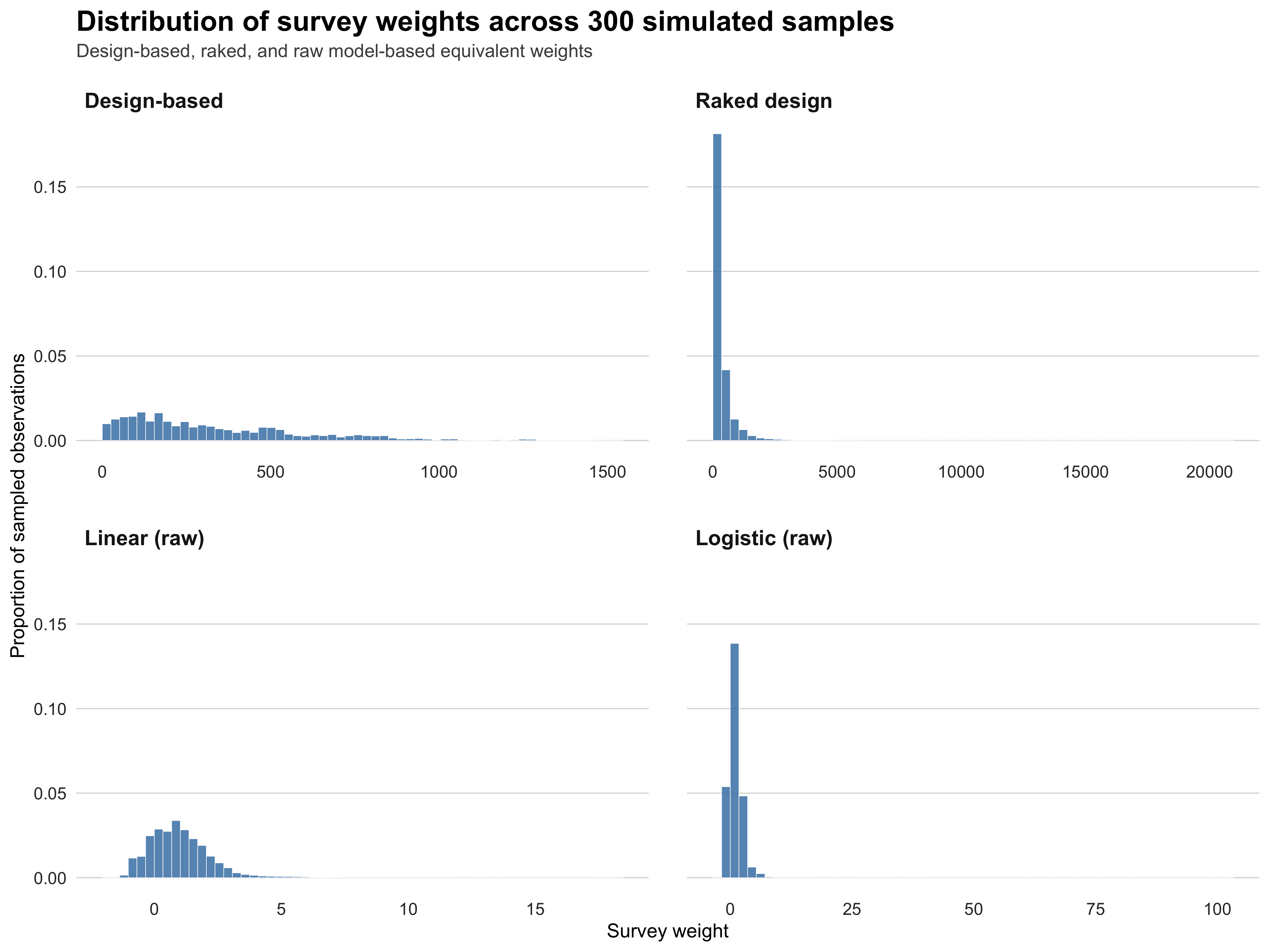}
    \caption{Distribution of design-based and model-based equivalent
    weights across 300 simulated samples. The raw linear and logistic
    weights illustrate the variability and occasional negative values
    that can arise from model-based equivalent-weight constructions.
    Raked weights are shown for comparison after the positivity
    adjustment and calibration procedure.}
    \label{fig:weight-distributions}
\end{figure}

\subsection{Practical Treatment of Nonpositive Weights}
\label{sec:practical-weights}

The theoretical development above concerns the raw local equivalent weights
defined in \eqref{eq:logit-weights}. In the empirical comparisons reported
below, we subsequently use raking to impose the known population margins.
Because the raking procedure requires strictly positive starting weights,
nonpositive model-based weights must first be modified.

For the practical comparisons, each nonpositive model-based weight is
replaced by the smallest strictly positive weight observed among the sample
observations for the corresponding weighting method. Specifically, let
$\hat w_i$ denote the raw model-based equivalent weight and define
\begin{equation}
w_i^{*}
=
\begin{cases}
\hat w_i, &
\hat w_i > 0,\\[4pt]
\displaystyle
\min_{j:\,\hat w_j>0}\hat w_j, &
\hat w_i \leq 0.
\end{cases}
\label{eq:positive-adjustment}
\end{equation}
The modified weights are then renormalized to sum to the sample size:
\begin{equation}
\tilde w_i
=
\frac{n w_i^{*}}
{\displaystyle\sum_{j=1}^{n}w_j^{*}}.
\label{eq:renormalized-weights}
\end{equation}
The resulting positive weights are used as the starting weights for the
raking procedure.

This modification is a practical post-processing step rather than part of
the proposed local equivalent-weight construction. In particular, the
asymptotic results in Section~3 concern the raw equivalent weights
$\hat w_i$ in \eqref{eq:logit-weights}, not the modified weights
$\tilde w_i$. Consequently, the theoretical results do not establish
asymptotic properties for the positivity-adjusted or subsequently raked
estimators.

The positivity adjustment is applied consistently across the model-based
weighting methods in the empirical comparisons. This allows the linear and
logistic constructions to be evaluated under the same practical requirement
that the starting weights be admissible for raking. The adjustment should
therefore be viewed as a stabilization and implementation step rather than
as part of the definition of the proposed estimator.

Alternative constrained transformations could be considered to enforce
positivity while preserving additional calibration constraints. Such
procedures are outside the scope of the present study. We instead report the
frequency and magnitude of nonpositive weights separately and evaluate the
practical consequences of the positivity adjustment empirically.

\subsection{Missingness and Nonresponse}

In practice, missing data in demographic and geographic variables used for survey weighting should be assessed before constructing weights. When auxiliary variables are missing, both design-based and model-based weighting become more difficult because the corresponding population and sample cells may no longer be directly comparable.

For nonresponse adjustment, design-based methods typically require an explicit adjustment step based on response rates within adjustment cells. The model-based construction studied here instead uses auxiliary variables directly in the regression model and population poststratification. This does not eliminate nonresponse assumptions: the validity of the model-based estimator still depends on whether the selected weighting variables adequately capture systematic differences relevant to the target population.

In the FFCWS application, city-level population information is not directly available in the CDC natality data and must be constructed using sample information and external population totals. We therefore treat the population cell counts used in the application as fixed inputs to the weighting procedures.

\section{Asymptotic Properties}

\subsection{Finite-population and superpopulation frameworks}

The weighting construction is motivated by a finite-population survey
problem in which the population cell sizes $N_s$ are treated as known
or estimated external benchmarks. The asymptotic results in this
section are instead established under a finite-cell superpopulation
framework, which provides a tractable large-sample approximation for
the estimated equivalent weights.

Specifically, we consider a finite set of poststratification cells and
assume that the sampled covariates $X_i$ are generated from a
distribution that assigns positive probability to the cells relevant
to the target population. The asymptotic analysis therefore studies
repeated samples from this superpopulation while treating the
target-population cell proportions as fixed. Accordingly, the results
below should be interpreted as superpopulation results under the stated
regularity conditions, rather than as a general design-based
asymptotic theory for complex surveys.

The distinction is important. The proposed weights are motivated by
survey calibration and poststratification, whereas the results below
provide model-based large-sample results for the limiting
equivalent-weight construction. Extensions to complex sampling
designs, including clustering and unequal-probability multistage
sampling, are left for future work.

\subsection{Population-limit equivalent weights}

Let the finite set of poststratification cells be
$s=1,\ldots,S$. Define the sample-cell probabilities
\[
\pi_s=P(X=x_s)>0,
\qquad s=1,\ldots,S,
\]
and the fixed target-population cell proportions
\[
q_s=\frac{N_s}{N},
\qquad
\sum_{s=1}^S q_s=1.
\]

For $\beta$ in a neighborhood of $\beta_0$, define
\[
p_s(\beta)
=
\sigma(x_s^\top\beta),
\qquad
v_s(\beta)
=
p_s(\beta)\{1-p_s(\beta)\}.
\]

For a generic vector of sample-cell probabilities
$\pi=(\pi_1,\ldots,\pi_S)$, define the
\emph{population logistic information matrix}
\[
Q(\beta,\pi)
=
\sum_{s=1}^S
\pi_s v_s(\beta)x_sx_s^\top
\]
and the corresponding target-population score vector
\[
C(\beta)
=
\sum_{s=1}^S
q_s v_s(\beta)x_s.
\]

Assume that $Q(\beta,\pi)$ is nonsingular on a neighborhood of
interest. The population-limit raw equivalent weight for an observation
in cell $s$ is defined by
\begin{equation}
w_s^{\mathrm{raw}}(\beta,\pi)
=
x_s^\top Q(\beta,\pi)^{-1}C(\beta).
\label{eq:population-raw-weight}
\end{equation}

Equivalently,
\[
w^{\mathrm{raw}}(X;\beta,\pi)
=
w_s^{\mathrm{raw}}(\beta,\pi)
\qquad
\text{when }X=x_s.
\]

For the observed sample, let
\[
\hat\pi_s
=
\frac{n_s}{n}
=
\frac{1}{n}\sum_{i=1}^n I(X_i=x_s).
\]
The corresponding finite-sample raw equivalent weight is
\begin{equation}
\hat w_i^{\mathrm{raw}}(\beta)
=
w^{\mathrm{raw}}(X_i;\beta,\hat\pi).
\label{eq:finite-raw-weight}
\end{equation}

Because the equivalent weights are identified only up to a common
multiplicative constant, the population target associated with the raw
weights is naturally written as the normalized functional
\[
T(\beta,\pi)
=
\frac{
E[w^{\mathrm{raw}}(X;\beta,\pi)Y]
}{
E[w^{\mathrm{raw}}(X;\beta,\pi)]
}.
\]

We define the population-limit equivalent-weight target by
\[
\theta_{\mathrm{EW}}=T(\beta_0,\pi).
\]

The corresponding sample estimator is
\begin{equation}
\hat\theta_{\mathrm{EW}}
=
\frac{
P_n[
\hat w^{\mathrm{raw}}(\hat\beta)Y
]
}{
P_n[
\hat w^{\mathrm{raw}}(\hat\beta)
]
},
\qquad
P_nf=\frac1n\sum_{i=1}^n f(X_i,Y_i).
\label{eq:normalized-ew-estimator}
\end{equation}

For estimation of the population mean, the raw equivalent weights can be
rescaled by a common nonzero constant; in particular, when their sum is
nonzero, they may be normalized to sum to $n$. This normalization changes
only the scale of the individual weights and does not change the resulting
weighted mean or effective sample size. The normalized representation is
therefore equivalent to rescaling the raw equivalent weights to sum to
$n$. Because the normalized weights depend on both $\hat{\beta}$ and
$\hat{\pi}$, this representation makes explicit that the sampling
variation arises both from the observed outcomes and from estimation of
the cell proportions entering the equivalent weights.
\subsection{Regularity conditions}

\begin{lemma}[Uniform envelope and continuity of logistic equivalent weights]
\label{lem:weight-regularity}

Suppose that $X$ consists entirely of categorical variables encoded as
dummies, so that
\[
\mathcal X=\{x_1,\ldots,x_S\}
\]
is finite and
\[
\sup_{x\in\mathcal X}\|x\|<\infty.
\]

Let $\mathcal N$ be a compact neighborhood of $\beta_0$ and let
$\Pi$ be a neighborhood of the population cell-probability vector
$\pi$. Suppose there exists $c>0$ such that
\[
\lambda_{\min}\{Q(\beta,\tilde\pi)\}
\ge c
\]
for all $(\beta,\tilde\pi)\in\mathcal N\times\Pi$.

Then:

\begin{enumerate}

\item[(i)] The map
\[
(\beta,\tilde\pi)
\mapsto
w^{\mathrm{raw}}(x;\beta,\tilde\pi)
\]
is continuous on $\mathcal N\times\Pi$ for every
$x\in\mathcal X$.

\item[(ii)] There exists a finite constant $C$ such that
\[
\sup_{(\beta,\tilde\pi)\in\mathcal N\times\Pi}
\max_{x\in\mathcal X}
\left|
w^{\mathrm{raw}}(x;\beta,\tilde\pi)
\right|
\le C.
\]

\end{enumerate}

Consequently, because $\hat\pi\xrightarrow{p}\pi$, the finite-sample
weights satisfy
\[
\sup_{\beta\in\mathcal N}
\max_{1\le i\le n}
\left|
\hat w_i^{\mathrm{raw}}(\beta)
\right|
=O_p(1).
\]

\end{lemma}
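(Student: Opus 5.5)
The plan is to write $w^{\mathrm{raw}}(x;\beta,\tilde\pi)=x^\top Q(\beta,\tilde\pi)^{-1}C(\beta)$ as a composition of elementary continuous maps. Part (i) then follows from continuity of each factor. Part (ii) follows by bounding each factor uniformly over $\mathcal N\times\Pi$.

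\textbf{Step 1 (continuity, part (i)).} Since $\sigma$ is smooth, the maps $\beta\mapsto p_s(\beta)$ and $\beta\mapsto v_s(\beta)$ are continuous for each $s$.
\begin{itemize}
\item $Q(\beta,\tilde\pi)=\sum_s\tilde\pi_s v_s(\beta)x_sx_s^\top$ is a finite sum of products of continuous functions, so it is jointly continuous in $(\beta,\tilde\pi)$.
\item $C(\beta)=\sum_s q_s v_s(\beta)x_s$ is continuous in $\beta$, since the $q_s$ are fixed.
\item On $\mathcal N\times\Pi$ we have $\lambda_{\min}\{Q\}\ge c>0$, so $Q$ takes values in the open set of invertible matrices. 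Matrix inversion is continuous there, for example by Cramer's rule, since the determinant is nonzero.
\end{itemize}
Hence $(\beta,\tilde\pi)\mapsto x^\top Q^{-1}C$ is continuous for each fixed $x\in\mathcal X$.

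\textbf{Step 2 (uniform bound, part (ii)).} I would avoid relying on compactness of $\Pi$, which is only assumed to be a neighborhood, and bound the factors directly.
\begin{itemize}
\item Since $Q$ is symmetric positive definite with $\lambda_{\min}\ge c$, we get $\|Q(\beta,\tilde\pi)^{-1}\|_{\mathrm{op}}\le 1/c$.
\item Since $0\le v_s\le 1/4$ and $\sum_s q_s=1$, we get $\|C(\beta)\|\le \tfrac14\max_s\|x_s\|$.
\end{itemize}
Setting $M=\sup_{x\in\mathcal X}\|x\|<\infty$, the Cauchy--Schwarz inequality gives
\[
|w^{\mathrm{raw}}(x;\beta,\tilde\pi)|\le \|x\|\,\|Q^{-1}\|_{\mathrm{op}}\,\|C(\beta)\|\le \frac{M^2}{4c}=:C.
\]
This bound holds uniformly over $(\beta,\tilde\pi)\in\mathcal N\times\Pi$ and $x\in\mathcal X$. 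Note that compactness of $\mathcal N$ is not even needed here, because the eigenvalue condition does the work.

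\textbf{Step 3 (consequence for the sample weights).} This is the only step that needs care, since $\hat\pi$ is random and the bound applies only when $\hat\pi\in\Pi$.
\begin{itemize}
\item By the weak law of large numbers applied to the indicators $I(X_i=x_s)$, $\hat\pi\xrightarrow{p}\pi$.
\item Since $\Pi$ is a neighborhood of $\pi$, it follows that $P(\hat\pi\in\Pi)\to1$.
\item On the event $\{\hat\pi\in\Pi\}$, each $X_i\in\mathcal X$, so $\hat w_i^{\mathrm{raw}}(\beta)=w^{\mathrm{raw}}(X_i;\beta,\hat\pi)$ and by Step 2
\[
\sup_{\beta\in\mathcal N}\max_{i}|\hat w_i^{\mathrm{raw}}(\beta)|\le C.
\]
\end{itemize}
Therefore, for any $\varepsilon>0$ and all large $n$,
\[
P\Bigl(\sup_{\beta}\max_i|\hat w_i^{\mathrm{raw}}(\beta)|>C\Bigr)\le P(\hat\pi\notin\Pi)<\varepsilon,
\]
which is exactly the $O_p(1)$ statement.

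One subtlety to flag in Step 3: for small $n$ some cells may be empty, or $Q(\beta,\hat\pi)$ may be singular, so the weights may not even be defined. The argument handles this because it only needs the bound on an event whose probability tends to one. No additional obstacle is expected beyond this.
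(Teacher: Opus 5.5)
Your proof is correct and follows the paper's route. Part (i) is continuity of $Q$, $C(\beta)$ and matrix inversion. Part (ii) is the bound $|w^{\mathrm{raw}}|\le\|x\|\,\|Q^{-1}\|_{\mathrm{op}}\|C(\beta)\|$ with $\|Q^{-1}\|_{\mathrm{op}}\le 1/c$. Your explicit constant $M^2/(4c)$ and your argument via the event $\{\hat\pi\in\Pi\}$, whose probability tends to one, make precise steps the paper states only briefly; the paper simply asserts the $O_p(1)$ conclusion after noting $\hat\pi\xrightarrow{p}\pi$.
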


\subsection{Effective sample size}

\begin{theorem}[Plug-in LLN for logistic equivalent weights]
\label{thm:plugin-LLN}

Let $\{(X_i,Y_i)\}_{i=1}^n$ be i.i.d., with $X_i$ taking values in the
finite set $\mathcal X$. Suppose the conditions of
Lemma~\ref{lem:weight-regularity} hold and
\[
\hat\beta\xrightarrow{p}\beta_0.
\]

Define
\[
\bar w_n(\beta,\tilde\pi)
=
\frac1n
\sum_{i=1}^n
w^{\mathrm{raw}}(X_i;\beta,\tilde\pi)
\]
and
\[
\overline{w^2}_n(\beta,\tilde\pi)
=
\frac1n
\sum_{i=1}^n
\left\{
w^{\mathrm{raw}}(X_i;\beta,\tilde\pi)
\right\}^2.
\]

Then
\[
\bar w_n(\hat\beta,\hat\pi)
\xrightarrow{p}
E\left[
w^{\mathrm{raw}}(X;\beta_0,\pi)
\right],
\]
and
\[
\overline{w^2}_n(\hat\beta,\hat\pi)
\xrightarrow{p}
E\left[
w^{\mathrm{raw}}(X;\beta_0,\pi)^2
\right].
\]

Consequently, the effective sample size
\[
\mathrm{ESS}_n
=
\frac{
\left\{
\sum_{i=1}^n
\hat w_i^{\mathrm{raw}}(\hat\beta)
\right\}^2
}{
\sum_{i=1}^n
\left\{
\hat w_i^{\mathrm{raw}}(\hat\beta)
\right\}^2
}
\]
satisfies
\begin{equation}
\frac{\mathrm{ESS}_n}{n}
\xrightarrow{p}
\frac{
\left[
E\{w^{\mathrm{raw}}(X;\beta_0,\pi)\}
\right]^2
}{
E\left[
w^{\mathrm{raw}}(X;\beta_0,\pi)^2
\right]
}.
\label{eq:ESS-limit}
\end{equation}

\end{theorem}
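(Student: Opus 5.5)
The finite support of $X$ reduces both averages to finite sums. Because $X_i\in\mathcal X=\{x_1,\ldots,x_S\}$, we can write
\[
\bar w_n(\beta,\tilde\pi)=\sum_{s=1}^S \hat\pi_s\, w_s^{\mathrm{raw}}(\beta,\tilde\pi),
\qquad
\overline{w^2}_n(\beta,\tilde\pi)=\sum_{s=1}^S \hat\pi_s\, \{w_s^{\mathrm{raw}}(\beta,\tilde\pi)\}^2 .
\]
Here $\hat\pi_s=n_s/n$ is the empirical cell frequency. Similarly, $E[w^{\mathrm{raw}}(X;\beta_0,\pi)^k]=\sum_s \pi_s\{w_s^{\mathrm{raw}}(\beta_0,\pi)\}^k$ for $k=1,2$. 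Each quantity in the theorem is therefore a continuous function of finitely many random inputs, $\hat\pi\in\mathbb R^S$ and $\hat\beta\in\mathbb R^k$. The continuous mapping theorem then does the work, and no uniform law of large numbers over $\beta$ is needed.

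The first step is to collect the convergences of the inputs. By the weak law of large numbers applied to the indicators $I(X_i=x_s)$, we have $\hat\pi\xrightarrow{p}\pi$. The hypothesis gives $\hat\beta\xrightarrow{p}\beta_0$, so jointly $(\hat\beta,\hat\pi,\hat\pi)\xrightarrow{p}(\beta_0,\pi,\pi)$. The second step is continuity. By Lemma~\ref{lem:weight-regularity}(i), each map $(\beta,\tilde\pi)\mapsto w_s^{\mathrm{raw}}(\beta,\tilde\pi)$ is continuous on $\mathcal N\times\Pi$. Hence the map
\[
(\beta,\tilde\pi,\rho)\mapsto \sum_s \rho_s\, w_s^{\mathrm{raw}}(\beta,\tilde\pi)^k
\]
is continuous on $\mathcal N\times\Pi\times\mathbb R^S$. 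Here the first slot $\tilde\pi$ is the probability vector entering $Q$, and the second slot $\rho$ holds the averaging frequencies; in the estimator both equal $\hat\pi$.

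The one technical point, and the only real obstacle, is that continuity holds only on the neighborhood $\mathcal N\times\Pi$. Off this set $Q$ may be singular and the weights need not be defined. I would handle this by working on the event $A_n=\{\hat\beta\in\mathcal N,\ \hat\pi\in\Pi\}$. Since $\mathcal N$ and $\Pi$ are neighborhoods of $\beta_0$ and $\pi$ (we may take $\beta_0$ interior to $\mathcal N$), consistency gives $P(A_n)\to1$. We can then define the statistics arbitrarily off $A_n$, or extend the continuous function by a bounded continuous extension. On $A_n$, convergence in probability follows from the continuous mapping theorem, which applies to a function continuous at the limit point $(\beta_0,\pi,\pi)$. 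Alternatively, one can use an $\varepsilon$–$\delta$ argument directly: given $\varepsilon$, continuity at the limit yields $\delta$, and $P(|\bar w_n(\hat\beta,\hat\pi)-E w|>\varepsilon)\le P(A_n^c)+P(\|(\hat\beta,\hat\pi)-(\beta_0,\pi)\|>\delta)\to0$. The uniform bound in Lemma~\ref{lem:weight-regularity}(ii) is not strictly needed here, but it confirms that both sample moments are $O_p(1)$. This completes the two convergence claims.

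For the effective sample size, divide numerator and denominator by $n^2$:
\[
\frac{\mathrm{ESS}_n}{n}=\frac{\{\bar w_n(\hat\beta,\hat\pi)\}^2}{\overline{w^2}_n(\hat\beta,\hat\pi)} .
\]
The function $(a,b)\mapsto a^2/b$ is continuous wherever $b\neq0$. The denominator limit $E[w^{\mathrm{raw}}(X;\beta_0,\pi)^2]$ is strictly positive. Indeed, if it vanished, then $w_s^{\mathrm{raw}}(\beta_0,\pi)=0$ for every $s$ with $\pi_s>0$, which is all $s$. That would give $Q(\beta_0,\pi)^{-1}C(\beta_0)$ orthogonal to every $x_s$, hence $Q^{-1}C=0$ because the $x_s$ span $\mathbb R^k$ when $Q$ is nonsingular, hence $C(\beta_0)=0$. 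This is excluded whenever the target score is nonzero, for example when an intercept is present, since then the intercept coordinate of $C(\beta_0)$ is $\sum_s q_s v_s>0$. I would state this nondegeneracy explicitly as implicit in the statement. A final application of the continuous mapping theorem then yields \eqref{eq:ESS-limit}.
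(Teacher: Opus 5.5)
Your proof is correct and follows essentially the paper's route. Both rest on finite support, continuity of the cell weights from Lemma~\ref{lem:weight-regularity}(i), and $(\hat\beta,\hat\pi)\xrightarrow{p}(\beta_0,\pi)$. The paper first shows that $w^{\mathrm{raw}}(X_i;\hat\beta,\hat\pi)$ is uniformly close to $w^{\mathrm{raw}}(X_i;\beta_0,\pi)$ and then applies the ordinary LLN, whereas you fold both steps into a single continuous-mapping step on $\sum_s\hat\pi_s\,w_s^{\mathrm{raw}}(\hat\beta,\hat\pi)^k$.

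You also make two checks that the paper's proof passes over when it applies the continuous mapping theorem to the ratio. One is working on the event $\{(\hat\beta,\hat\pi)\in\mathcal N\times\Pi\}$. The other is showing $E[w^{\mathrm{raw}}(X;\beta_0,\pi)^2]>0$ when $C(\beta_0)\neq0$, for example with an intercept; this is a mild nondegeneracy condition that the stated limit implicitly requires.
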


\begin{remark}[Scale invariance]
If all equivalent weights are multiplied by a common positive constant,
both the normalized estimator and the effective sample size are
unchanged. The raw-weight representation is therefore used for
derivation, while normalized weights may be used for reporting.
\end{remark}

\subsection{Asymptotic linearity}

We next characterize the sampling variation of the normalized
equivalent-weight estimator. Two estimated quantities enter the
weights: the logistic regression coefficient $\hat\beta$ and the
empirical cell-probability vector $\hat\pi$. Their contributions are
therefore treated separately.

Define
\[
a_0
=
E\left[
w^{\mathrm{raw}}(X;\beta_0,\pi)
\right].
\]

The sensitivity of the population equivalent-weight estimator to the
logistic regression coefficient is
\begin{equation}
B_\beta
=
\left.
\frac{\partial T(\beta,\pi)}
{\partial\beta}
\right|_{(\beta_0,\pi)}.
\label{eq:B-beta}
\end{equation}

Similarly, define the sensitivity to the sample-cell probabilities by
\begin{equation}
B_\pi
=
\left.
\frac{\partial T(\beta_0,\tilde\pi)}
{\partial\tilde\pi}
\right|_{\tilde\pi=\pi}.
\label{eq:B-pi}
\end{equation}

The derivatives in \eqref{eq:B-beta} and \eqref{eq:B-pi} describe how
the final population estimator changes when the fitted logistic
coefficient or the empirical cell composition changes.

\begin{lemma}[Differentiability of the population-limit weights]
\label{lem:weight-differentiability}

Under the assumptions of
Lemma~\ref{lem:weight-regularity}, suppose additionally that
$Q(\beta,\tilde\pi)$ is nonsingular throughout
$\mathcal N\times\Pi$.

Then
\[
(\beta,\tilde\pi)
\mapsto
w^{\mathrm{raw}}(x;\beta,\tilde\pi)
\]
is continuously differentiable for every $x\in\mathcal X$.
Moreover, there exist finite constants $C_\beta$ and $C_\pi$ such that
\[
\sup_{(\beta,\tilde\pi)\in\mathcal N\times\Pi}
\left\|
\nabla_\beta
w^{\mathrm{raw}}(x;\beta,\tilde\pi)
\right\|
\le C_\beta
\]
and
\[
\sup_{(\beta,\tilde\pi)\in\mathcal N\times\Pi}
\left\|
\nabla_\pi
w^{\mathrm{raw}}(x;\beta,\tilde\pi)
\right\|
\le C_\pi
\]
for every $x\in\mathcal X$.

Consequently, the functional $T(\beta,\pi)$ is continuously
differentiable in both arguments in a neighborhood of
$(\beta_0,\pi)$.

\end{lemma}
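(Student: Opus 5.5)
The plan is to write $w^{\mathrm{raw}}(x;\beta,\tilde\pi)=x^\top Q(\beta,\tilde\pi)^{-1}C(\beta)$ as a composition of smooth maps and differentiate term by term. Because $\sigma$ is $C^\infty$, each $p_s(\beta)=\sigma(x_s^\top\beta)$ and $v_s(\beta)=p_s(\beta)\{1-p_s(\beta)\}$ is $C^\infty$ in $\beta$, with
\[
\nabla_\beta v_s(\beta)=v_s(\beta)\{1-2p_s(\beta)\}x_s .
\]
It follows that $C(\beta)=\sum_s q_s v_s(\beta)x_s$ is $C^\infty$ in $\beta$ and does not depend on $\tilde\pi$. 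Also, $Q(\beta,\tilde\pi)=\sum_s\tilde\pi_s v_s(\beta)x_sx_s^\top$ is $C^\infty$ in $\beta$ and affine in $\tilde\pi$, with $\partial Q/\partial\tilde\pi_s=v_s(\beta)x_sx_s^\top$. Matrix inversion is $C^\infty$ on the open set of nonsingular matrices, and $Q$ is nonsingular on $\mathcal N\times\Pi$. Hence $Q^{-1}$ is $C^1$ there, with
\[
\partial_\theta Q^{-1}=-Q^{-1}(\partial_\theta Q)Q^{-1}
\]
for any scalar coordinate $\theta$ of $(\beta,\tilde\pi)$. Composing and multiplying gives continuous differentiability of $w^{\mathrm{raw}}(x;\cdot,\cdot)$ for every $x\in\mathcal X$.

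Next I would write the gradients explicitly:
\[
\partial_{\beta_j}w^{\mathrm{raw}}=x^\top\bigl\{-Q^{-1}(\partial_{\beta_j}Q)Q^{-1}C+Q^{-1}\partial_{\beta_j}C\bigr\},
\qquad
\partial_{\tilde\pi_s}w^{\mathrm{raw}}=-x^\top Q^{-1}v_s x_sx_s^\top Q^{-1}C .
\]
Each factor can then be bounded uniformly. Since $0\le v_s\le 1/4$ and $|1-2p_s|\le1$, every entry of $C$, $\partial C$ and $\partial Q$ is bounded by a constant times powers of $M=\sup_{x\in\mathcal X}\|x\|$, uniformly in $\beta$ and $\tilde\pi$ (the $\tilde\pi_s$ may be taken bounded, as $\Pi$ is a neighborhood of a probability vector). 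For the inverse I would use the eigenvalue bound from Lemma~\ref{lem:weight-regularity}, $\|Q^{-1}\|\le 1/c$ on $\mathcal N\times\Pi$. Finiteness of $\mathcal X$ then lets the supremum over $x$ be a maximum over $S$ points. This gives explicit constants, for instance $C_\pi\le\sqrt S\,M^5/(16c^2)$, and a similar polynomial expression in $M$ and $1/c$ for $C_\beta$.

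For the consequence about $T$, I would use the finiteness of $\mathcal X$ again to write
\[
E[w^{\mathrm{raw}}(X;\beta,\tilde\pi)Y]=\sum_s\pi_s\mu_s w_s^{\mathrm{raw}}(\beta,\tilde\pi),
\qquad
E[w^{\mathrm{raw}}(X;\beta,\tilde\pi)]=\sum_s\pi_s w_s^{\mathrm{raw}}(\beta,\tilde\pi),
\]
with $\mu_s=E[Y\mid X=x_s]\in[0,1]$. These are finite linear combinations of $C^1$ functions, so both are $C^1$. The ratio $T$ is then $C^1$ wherever the denominator is nonzero. The main obstacle is that nonvanishing of the denominator is not among the stated hypotheses. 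I would first handle it by noting that $T$ is only meaningful, and is implicitly assumed meaningful in the definition of $\theta_{\mathrm{EW}}$, when $a_0\ne0$; continuity of the denominator (part (i) of Lemma~\ref{lem:weight-regularity}) then keeps it bounded away from zero on a possibly smaller neighborhood of $(\beta_0,\pi)$. One caveat concerns $T(\beta,\tilde\pi)$ for $\tilde\pi\neq\pi$: whether the expectation uses $\tilde\pi$ or the true $\pi$ as the cell distribution, the map remains smooth, since either choice yields a polynomial-times-$C^1$ expression.
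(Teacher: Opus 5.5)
Your proposal is correct and follows essentially the same route as the paper's proof. The shared steps are: smoothness of $v_s$ with $\nabla_\beta v_s=v_s(1-2p_s)x_s$, the inverse-derivative formula $-Q^{-1}(\partial Q)Q^{-1}$, uniform bounds from $\|Q^{-1}\|\le 1/c$ and finiteness of $\mathcal X$, and the quotient rule for $T$. In both arguments the denominator of $T$ is kept away from zero only through the assumption $a_0\neq 0$, which the paper also invokes without stating it as a hypothesis.

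One small arithmetic slip: bounding $\|x\|\,\|Q^{-1}\|^2\,v_s\|x_s\|^2\,\|C\|$ with $v_s\le 1/4$ and $\|C\|\le M/4$ gives $C_\pi\le\sqrt S\,M^4/(16c^2)$, not $M^5$. This is harmless, since $M\ge 1$ for dummy-coded covariates and only finiteness is needed.
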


\subsection{Asymptotic linearity and plug-in variance}

\begin{theorem}[Asymptotic linearity of the equivalent-weight estimator]
\label{thm:AL}

Suppose the conditions of
Lemmas~\ref{lem:weight-regularity} and
\ref{lem:weight-differentiability} hold. In addition, assume:

\begin{enumerate}

\item[(i)] The logistic regression model is correctly specified at
$\beta_0$, so that
\[
p_0(X)
=
\sigma(X^\top\beta_0).
\]

\item[(ii)] $\beta_0$ is an interior point of the parameter space and
the Fisher information matrix
\[
\mathcal I(\beta_0)
=
E\left[
XX^\top p_0(X)\{1-p_0(X)\}
\right]
\]
is positive definite.

\item[(iii)] The logistic maximum likelihood estimator satisfies
the standard asymptotic linear representation
\[
\sqrt n(\hat\beta-\beta_0)
=
\frac1{\sqrt n}
\sum_{i=1}^n
\mathcal I(\beta_0)^{-1}
X_i
\{Y_i-p_0(X_i)\}
+
o_p(1).
\]

\end{enumerate}

Let
\[
e(X)
=
\left(
I(X=x_1),\ldots,I(X=x_S)
\right)^\top.
\]

Then
\[
\sqrt n
\left(
\hat\theta_{\mathrm{EW}}
-
\theta_{\mathrm{EW}}
\right)
=
\frac1{\sqrt n}
\sum_{i=1}^n
\psi(Z_i)
+
o_p(1),
\]
where
\[
Z_i=(X_i,Y_i)
\]
and the influence function is
\begin{equation}
\boxed{
\psi(Z)
=
\frac{
w^{\mathrm{raw}}(X;\beta_0,\pi)
}{
a_0
}
\{Y-\theta_{\mathrm{EW}}\}
+
B_\beta^\top
\mathcal I(\beta_0)^{-1}
X
\{Y-p_0(X)\}
+
B_\pi^\top
\{e(X)-\pi\}.
}
\label{eq:EW-influence}
\end{equation}

Consequently,
\[
\sqrt n
\left(
\hat\theta_{\mathrm{EW}}
-
\theta_{\mathrm{EW}}
\right)
\Rightarrow
N(0,V),
\]
where
\[
V
=
E\left[\psi(Z)^2\right].
\]

\end{theorem}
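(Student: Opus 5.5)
We plan to write the estimator as a ratio of two empirical averages evaluated at plug-in parameters, and then linearize each average separately. For $(\beta,\tilde\pi)$ in $\mathcal N\times\Pi$, set
\[
A_n(\beta,\tilde\pi)=P_n[w^{\mathrm{raw}}(X;\beta,\tilde\pi)Y],\qquad
D_n(\beta,\tilde\pi)=P_n[w^{\mathrm{raw}}(X;\beta,\tilde\pi)],
\]
so that $\hat\theta_{\mathrm{EW}}=A_n(\hat\beta,\hat\pi)/D_n(\hat\beta,\hat\pi)$. Let $A(\beta,\tilde\pi)$ and $D(\beta,\tilde\pi)$ denote the corresponding expectations under the true distribution of $(X,Y)$. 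Then $T=A/D$ and $D(\beta_0,\pi)=a_0$, which we take to be nonzero; this is implicit in the definition of $\theta_{\mathrm{EW}}$.

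The central step is the decomposition
\[
A_n(\hat\beta,\hat\pi)-A(\beta_0,\pi)=\{A_n-A\}(\beta_0,\pi)+\{A(\hat\beta,\hat\pi)-A(\beta_0,\pi)\}+R_n^A,
\]
with an analogous decomposition for $D_n$. The remainder is
\[
R_n^A=\{A_n-A\}(\hat\beta,\hat\pi)-\{A_n-A\}(\beta_0,\pi).
\]
We must show $R_n^A=o_p(n^{-1/2})$, which is a stochastic equicontinuity condition.

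We expect this to be the main obstacle, but the finite support of $X$ makes it tractable. We would write
\[
A_n(\beta,\tilde\pi)=\sum_{s} w_s^{\mathrm{raw}}(\beta,\tilde\pi)\,P_n[I(X=x_s)Y],
\]
and $A$ in the same form with $E[I(X=x_s)Y]$ replacing the empirical average. Then
\[
R_n^A=\sum_s\{w_s^{\mathrm{raw}}(\hat\beta,\hat\pi)-w_s^{\mathrm{raw}}(\beta_0,\pi)\}\,(P_n-P)[I(X=x_s)Y].
\]
Each bracket is $O_p(n^{-1/2})$ by the mean value theorem, using the bounded gradients $C_\beta$ and $C_\pi$ from Lemma~\ref{lem:weight-differentiability}. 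This requires $\hat\beta-\beta_0=O_p(n^{-1/2})$, which follows from assumption (iii), and $\hat\pi-\pi=O_p(n^{-1/2})$, which follows from the multinomial CLT. Each empirical-process factor $(P_n-P)[\cdot]$ is $O_p(n^{-1/2})$ by the ordinary CLT. Each summand is therefore $O_p(n^{-1})$, and because $S$ is finite, $R_n^A=O_p(n^{-1})=o_p(n^{-1/2})$. The same argument applies to $D_n$, which also gives $D_n(\hat\beta,\hat\pi)\to_p a_0$, consistent with Theorem~\ref{thm:plugin-LLN}.

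Next we linearize the deterministic drift terms. By the continuous differentiability of $A$ and $D$ (Lemma~\ref{lem:weight-differentiability}),
\[
A(\hat\beta,\hat\pi)-A(\beta_0,\pi)=\nabla_\beta A^\top(\hat\beta-\beta_0)+\nabla_\pi A^\top(\hat\pi-\pi)+o_p(n^{-1/2}),
\]
and similarly for $D$. We then apply the delta method to the ratio $A/D$ at $(A(\beta_0,\pi),a_0)$. The empirical parts combine into
\[
P_n\bigl[w^{\mathrm{raw}}(X;\beta_0,\pi)\{Y-\theta_{\mathrm{EW}}\}\bigr]/a_0 .
\]
The drift parts combine into $(\nabla A-\theta_{\mathrm{EW}}\nabla D)/a_0$ applied to the parameter errors, and by the quotient rule this equals $\nabla T$. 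So the $\beta$-drift is $B_\beta^\top(\hat\beta-\beta_0)$ and the $\pi$-drift is $B_\pi^\top(\hat\pi-\pi)$. Since $Y$ enters linearly and $\pi$ enters $A$ only through the weights, differentiating $T(\beta_0,\tilde\pi)$ in $\tilde\pi$ while keeping the distribution of $(X,Y)$ fixed gives exactly $B_\pi$ as defined in \eqref{eq:B-pi}.

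For the final substitution, we replace $\hat\beta-\beta_0$ by its linear representation from (iii) and write $\hat\pi-\pi=P_n[e(X)-\pi]$ exactly. All three pieces are then sample means of the i.i.d. terms making up $\psi(Z_i)$ in \eqref{eq:EW-influence}, and the error is $o_p(n^{-1/2})$.

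For asymptotic normality, we note that $\psi$ has mean zero. The first term has mean $E[w^{\mathrm{raw}}Y]/a_0-\theta_{\mathrm{EW}}=0$. The second has mean zero by correct specification (i), since $E[Y\mid X]=p_0(X)$. The third has mean zero trivially. Finally, $\psi$ has finite variance because the weights are bounded by Lemma~\ref{lem:weight-regularity}, the support of $X$ is bounded, and $Y$ is binary. The classical CLT then gives $\sqrt n(\hat\theta_{\mathrm{EW}}-\theta_{\mathrm{EW}})\Rightarrow N(0,V)$ by Slutsky's theorem.
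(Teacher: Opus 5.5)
Your proof is correct and uses the same decomposition as the paper. You split $\hat\theta_{\mathrm{EW}}-\theta_{\mathrm{EW}}$ into three pieces. The first is the direct empirical term at $(\beta_0,\pi)$. The second is the drift $T(\hat\beta,\hat\pi)-T(\beta_0,\pi)$, linearized through $B_\beta$ and $B_\pi$, with (iii) and $\hat\pi-\pi=P_n[e(X)-\pi]$ substituted. The third is an empirical-process remainder.

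The real difference is how you control that remainder. The paper argues that $\{w^{\mathrm{raw}}(\cdot;\beta,\tilde\pi)Y\}$ is a bounded class that is Lipschitz in the parameter, hence $P$-Donsker. It then invokes stochastic equicontinuity to get $o_p(n^{-1/2})$. You instead use the finite support to factor
\[
R_n^A=\sum_s\{w_s^{\mathrm{raw}}(\hat\beta,\hat\pi)-w_s^{\mathrm{raw}}(\beta_0,\pi)\}\,(P_n-P)[I(X=x_s)Y].
\]
This is a finite sum of products of two $O_p(n^{-1/2})$ factors. Your route is entirely elementary, needs no empirical-process theorem, and gives the sharper rate $O_p(n^{-1})$. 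The paper's route is the template that would still work if the covariate space were not finite, where no such cellwise factorization exists.

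You also linearize the numerator and denominator separately and then apply the delta method to the ratio. This makes explicit the step from the numerator and denominator remainders to the ratio remainder $R_n$, which the paper asserts in one line.

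One small point: your mean-value bound needs the segment between $(\beta_0,\pi)$ and $(\hat\beta,\hat\pi)$ to lie in $\mathcal N\times\Pi$. So either take these neighborhoods convex (e.g.\ balls), or note that continuity alone makes each bracket $o_p(1)$, which already suffices for $R_n^A=o_p(n^{-1/2})$.
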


\begin{remark}[Interpretation of the influence function]
The three terms in \eqref{eq:EW-influence} correspond to three distinct
sources of first-order sampling variation. The first term is the
ordinary variation in the weighted outcome mean when the limiting
weights are treated as fixed. The second term accounts for estimation
of the logistic regression coefficient $\beta_0$. The third term
accounts for estimation of the sample-cell probabilities $\pi$ used
to construct the equivalent weights.

Thus, the asymptotic variance incorporates both outcome variation and
the additional uncertainty induced by estimating the model and the
sample-cell composition.
\end{remark}

\subsection{Plug-in variance estimator}

For practical variance estimation, define
\[
\hat a
=
\frac1n
\sum_{i=1}^n
\hat w_i^{\mathrm{raw}}(\hat\beta),
\]
and let the raw equivalent weights be normalized by
\[
\tilde w_i
=
\frac{
\hat w_i^{\mathrm{raw}}(\hat\beta)
}{
\hat a
}.
\]

The empirical analogue of the first component of
\eqref{eq:EW-influence} is therefore
\[
\tilde w_i(Y_i-\hat\theta_{\mathrm{EW}}).
\]

Let
\[
\hat p_i
=
\sigma(X_i^\top\hat\beta),
\]
and define
\[
\hat{\mathcal I}
=
\frac1n
\sum_{i=1}^n
X_iX_i^\top
\hat p_i(1-\hat p_i).
\]

Define
\[
\hat B_\beta
=
\left.
\frac{\partial T(\beta,\hat\pi)}
{\partial\beta}
\right|_{\beta=\hat\beta}
\]
and
\[
\hat B_\pi
=
\left.
\frac{\partial T(\hat\beta,\tilde\pi)}
{\partial\tilde\pi}
\right|_{\tilde\pi=\hat\pi}.
\]

Then the estimated influence contribution for observation $i$ is
\begin{equation}
\hat\psi_i
=
\tilde w_i
(Y_i-\hat\theta_{\mathrm{EW}})
+
\hat B_\beta^\top
\hat{\mathcal I}^{-1}
X_i
(Y_i-\hat p_i)
+
\hat B_\pi^\top
\{e(X_i)-\hat\pi\}.
\label{eq:estimated-influence}
\end{equation}

A plug-in variance estimator is
\begin{equation}
\boxed{
\widehat V
=
\frac1n
\sum_{i=1}^n
\hat\psi_i^2.
}
\label{eq:plugin-variance}
\end{equation}

Under the regularity conditions of Theorem~\ref{thm:AL},
\[
\widehat V\xrightarrow{p}V.
\]

\section{Results}
\subsection{Simulation Results}

We first examine the performance of the weighting methods in regression
modeling and their effect on effective sample size. Figure~\ref{fig:MSE}
reports the pooled mean squared error (MSE) of the regression
coefficients across 300 independent resamples. In the simulated setting,
both model-based approaches---linear and logistic weighting---yield lower
average MSE than the design-based and unweighted approaches, although the
MSE varies substantially across resamples. Thus, the lower average MSE of
the model-based approaches does not necessarily imply uniformly greater
stability in the regression estimates.

Figure~\ref{fig:ESS} presents the effective sample size (ESS) of the
three weighting methods across the 300 simulated samples, before and
after raking. Before raking, the design-based weights have the lowest
ESS, while the linear and logistic equivalent weights have substantially
larger and similar ESS values. After raking, the pattern changes:
the design-based weights have the highest ESS, whereas the linear and
logistic weights have lower ESS, with the logistic weights yielding the
lowest ESS among the three methods. Thus, in this simulation, raking
increases the ESS of the design-based weights but reduces the ESS of
the model-based equivalent weights. These results illustrate that
raking can substantially alter the concentration of the weights and
that the effect of raking differs across weighting constructions.

\begin{figure}[H]
  \centering
  \begin{subfigure}{0.5\textwidth}
    \centering
    \includegraphics[width=\linewidth]{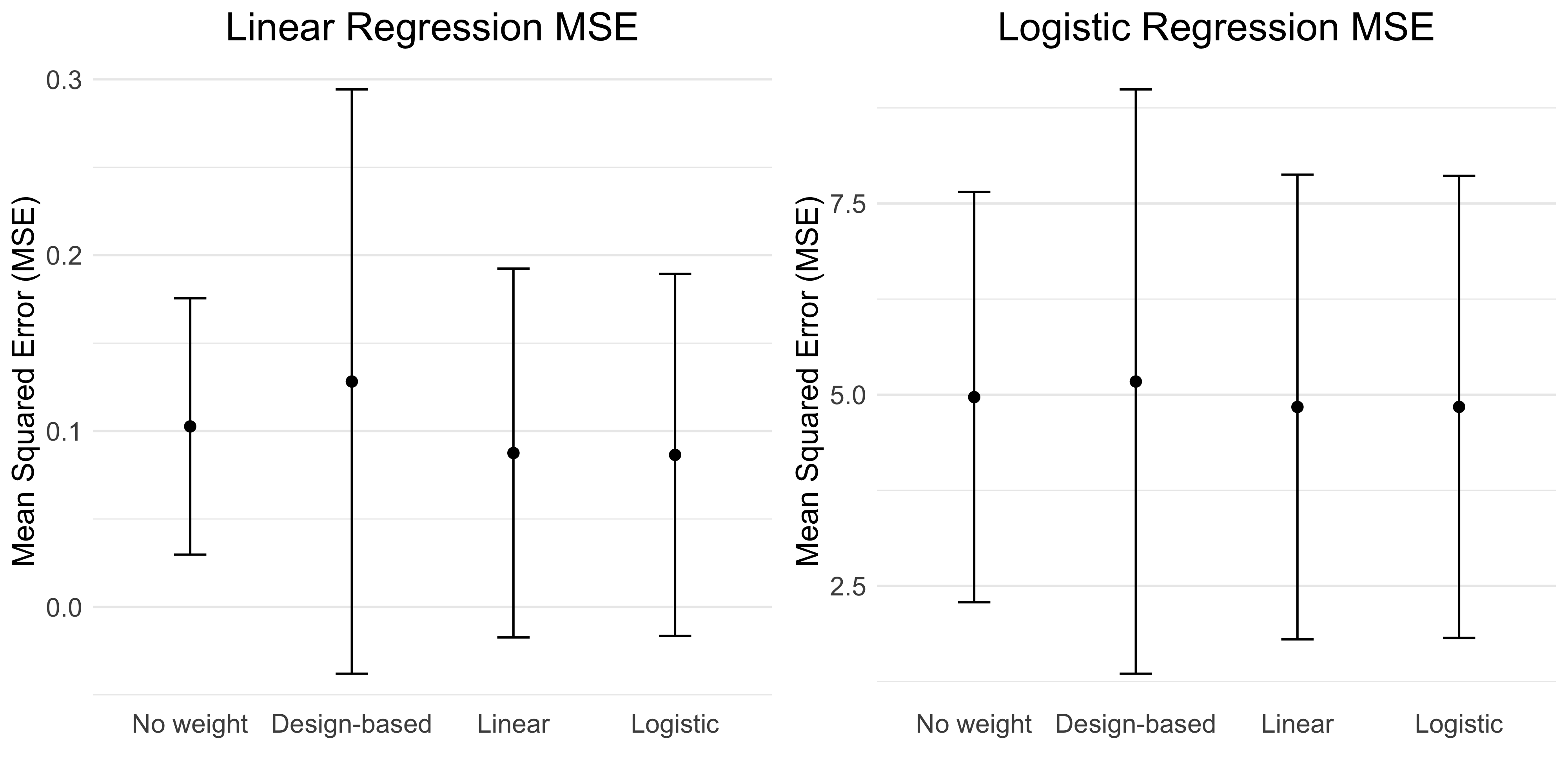}
    \caption{MSE across resamples.}
    \label{fig:MSE}
  \end{subfigure}\hfill
  \begin{subfigure}{0.5\textwidth}
    \centering
    \includegraphics[width=\linewidth]{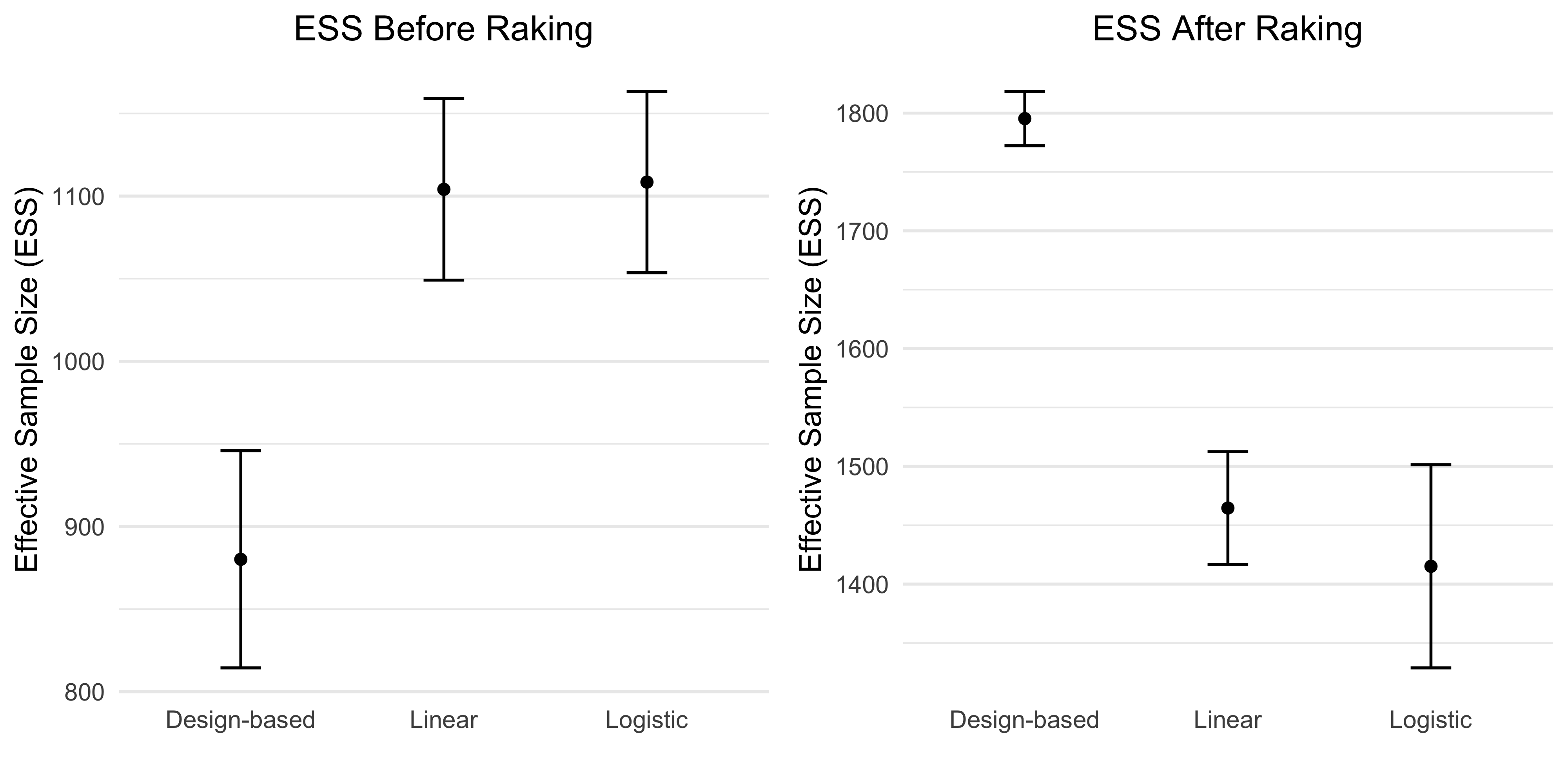}
    \caption{ESS across resamples ($n=3000$).}
    \label{fig:ESS}
  \end{subfigure}
\end{figure}

\FloatBarrier
Next, we evaluate the finite-sample performance of the three
equivalent-weight estimators in estimating the population mean of a
binary outcome. The simulation target is the superpopulation mean
\[
\theta_0
=
\frac{1}{N}\sum_{j=1}^N p_j,
\]
where $p_j$ denotes the true outcome probability used to generate the
binary responses. Table~\ref{tab:simulation_performance} reports the
Monte Carlo mean of the estimator, bias, mean squared error (MSE), and
empirical standard deviation.

\begin{table}[H]
\centering
\caption{Finite-sample performance of equivalent-weight estimators}
\label{tab:simulation_performance}
\begin{tabular}{lrrrr}
\hline
Method & Estimate & Bias & MSE & Empirical SD \\
\hline
Design-based & 0.352 & -0.021 & 0.000595 & 0.0121 \\
Linear       & 0.374 &  0.001 & 0.000148 & 0.0121 \\
Logistic     & 0.369 & -0.004 & 0.000194 & 0.0134 \\
\hline
Population mean & \multicolumn{4}{c}{0.373} \\
\hline
\end{tabular}
\end{table}

The linear estimator has the smallest MSE in this simulation and is
approximately unbiased, while the logistic estimator also exhibits
small finite-sample bias and competitive MSE. The design-based
estimator has the largest bias and MSE under the data-generating
process considered. These results illustrate that the proposed
logistic equivalent weights provide competitive finite-sample point
estimation, although they do not uniformly dominate the alternative
weighting methods.

Although the overall population mean provides the primary measure of
finite-sample performance, it can mask heterogeneity in estimation
across population subgroups. We therefore conduct a secondary
city-level analysis using the 16 city categories included in the
simulation. For city $c$, let
\[
\theta_c
=
\frac{1}{N_c}
\sum_{j:C_j=c} p_j,
\]
denote the corresponding population prevalence, where $N_c$ is the
population size of city $c$. Figure~\ref{fig:city_prevalence} shows
the distribution of the estimated city-level prevalence across the
300 simulated samples, with the corresponding population prevalence
indicated by the red points.

\begin{figure}[H]
    \centering
    \includegraphics[width=\textwidth]{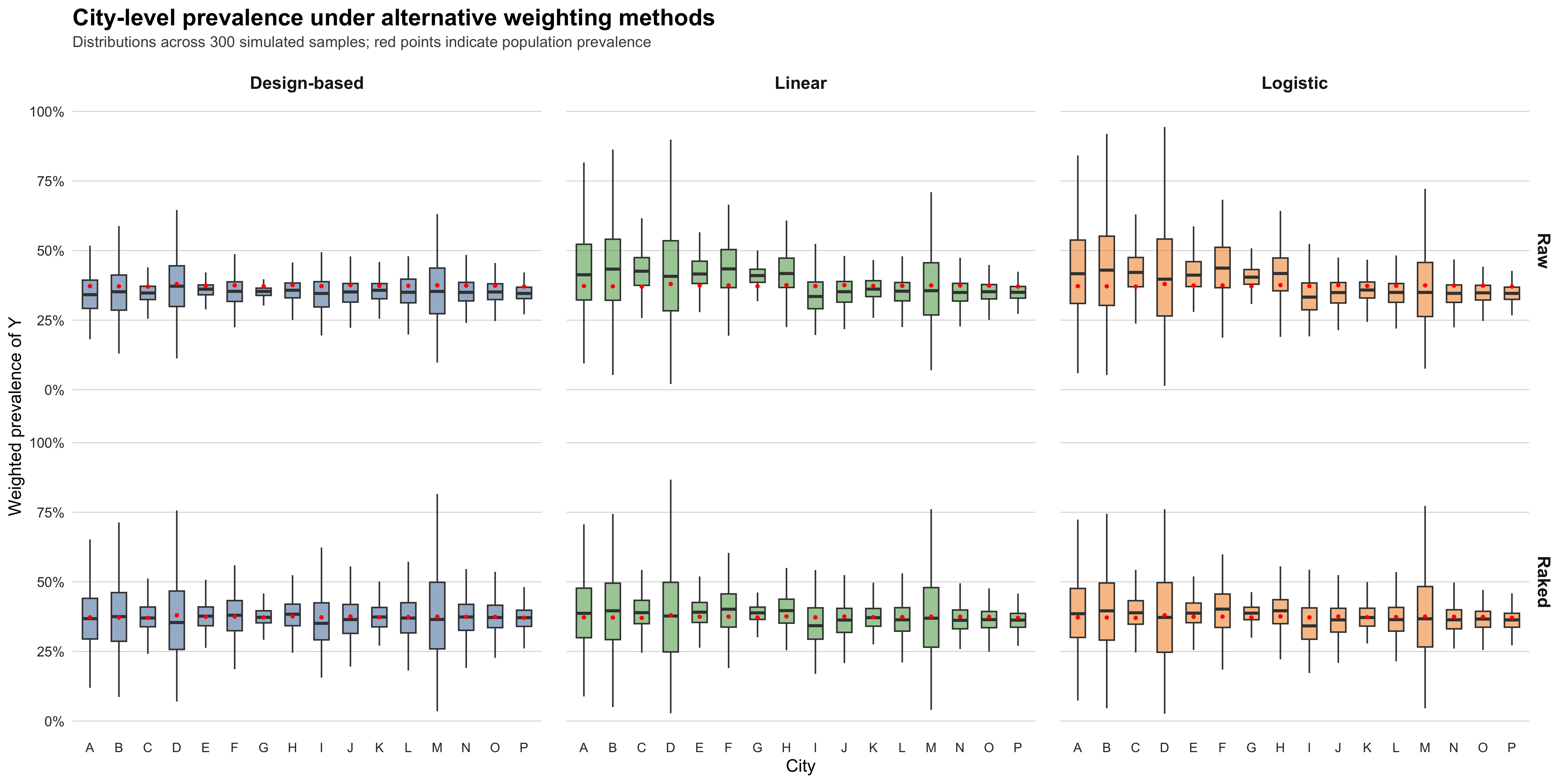}
    \caption{City-level prevalence across 300 simulated samples under
    alternative weighting methods. Columns correspond to the
    design-based, linear, and logistic weighting methods. The upper
    row
    shows estimates based on raw weights, while the lower row shows
    estimates after raking. Red points indicate the corresponding
    population prevalence for each city.}
    \label{fig:city_prevalence}
\end{figure}

The city-level results provide a complementary assessment of the overall
population-mean results in Table~\ref{tab:simulation_performance}. The
red points indicate the corresponding population prevalence for each
city, allowing the location and dispersion of the sampling distributions
to be compared across weighting methods. The raw linear and logistic
estimators exhibit greater between-sample variability than the
design-based estimator for several cities, with the variability being
particularly pronounced for the logistic estimator in some subgroups.
At the same time, the centers of the model-based distributions are
generally close to the corresponding population prevalences. After
raking, the distributions become more concentrated around the city-level
population prevalences for most cities, although the magnitude of this
change varies across cities and weighting methods.

To further connect the finite-sample evidence with the asymptotic theory, we
provide an additional check of Theorem~1. The theorem establishes that the
effective sample size ratio satisfies
\[
\frac{\mathrm{ESS}_n}{n}
\overset{p}{\longrightarrow}
\frac{
\left\{\mathbb{E}\left[
w^{\mathrm{raw}}(X;\beta_0,\pi)
\right]\right\}^2
}{
\mathbb{E}\left[
w^{\mathrm{raw}}(X;\beta_0,\pi)^2
\right]
}.
\]
Under the finite-cell framework, the population limit can be written as
\[
\frac{
\left\{
\sum_{s=1}^S
\pi_s w^{\mathrm{raw}}(x_s;\beta_0,\pi)
\right\}^2
}{
\sum_{s=1}^S
\pi_s
\left[w^{\mathrm{raw}}(x_s;\beta_0,\pi)\right]^2
},
\qquad
\pi_s = \Pr(X=x_s).
\]

Using 300 simulation replications, we compare the empirical ESS ratio
computed from the estimated individual equivalent weights,
\[
\frac{\bar{\hat w}^{\,2}}
{\overline{\hat w^2}},
\]
with a cell-based finite-sample approximation to the population limit,
\[
\frac{
\left\{
\sum_{s=1}^S
\hat\pi_s w^{\mathrm{raw}}(x_s;\beta_0,\pi)
\right\}^2
}{
\sum_{s=1}^S
\hat\pi_s
\left[w^{\mathrm{raw}}(x_s;\beta_0,\pi)\right]^2
},
\qquad
\hat\pi_s=\frac{n_s}{n}.
\]
The latter uses the population-limit equivalent weights while replacing
the population cell proportions $\pi_s$ by their empirical counterparts
$\hat\pi_s$. Since $\hat\pi_s \overset{p}{\to} \pi_s$, this cell-based quantity
converges to the population ESS limit in Theorem~1.

Table~\ref{tab:ess-mean-iqr} reports the mean and interquartile range of the
two quantities across the 300 replications. The empirical ESS ratio and the
cell-based approximation are similar, providing finite-sample evidence
consistent with the convergence established in Theorem~1.

\renewcommand{\arraystretch}{1.3}
\begin{table}[H]
\centering
\caption{Finite-sample ESS/$n$ ratios: mean and IQR}
\label{tab:ess-mean-iqr}
\begin{tabular}{lcc}
\hline
 & Mean & IQR [Q1, Q3] \\
\hline
Empirical ESS ratio
& 0.371 & [0.360, 0.385] \\

Cell-based approximation
& 0.379 & [0.369, 0.393] \\
\hline
\end{tabular}
\end{table}
\renewcommand{\arraystretch}{1.0}

Finally, to assess the asymptotic linearity result in
Theorem~\ref{thm:AL}, we examine the finite-sample
distribution of the centered and scaled equivalent-weight estimator.
For each replication, let $\hat{\theta}_{\mathrm{EW}}$ denote the
logistic equivalent-weight estimator and let
$\theta_{\mathrm{EW}}=T(\beta_0,\pi)$ denote its population-limit target.
Theorem~\ref{thm:AL} establishes that
\[
\sqrt{n}\left(
\hat{\theta}_{\mathrm{EW}}-\theta_{\mathrm{EW}}
\right)
\Rightarrow
N(0,V),
\]
where
\[
V=\mathbb{E}\left[\psi(Z)^2\right],
\]
and $\psi(Z)$ is the influence function given in
Theorem~\ref{thm:AL}.

We therefore assess the corresponding standardized statistic
\[
Z_n
=
\frac{
\sqrt{n}\left(
\hat{\theta}_{\mathrm{EW}}-\theta_{\mathrm{EW}}
\right)
}{
\sqrt{\widehat V}
},
\]
which should be approximately standard normal in finite samples. A
normal Q--Q plot of $Z_n$ is used to assess the adequacy of this
asymptotic normal approximation. The Q--Q plot of the standardized estimator shows close agreement with the standard normal distribution over the central portion of the
distribution, with modest deviations in the tails. This provides evidence that the normal approximation underlying the Wald inference is reasonable in the simulated setting.

\begin{figure}[H]
\centering
\begin{minipage}{0.6\textwidth}
    \centering
    \includegraphics[width=\textwidth]{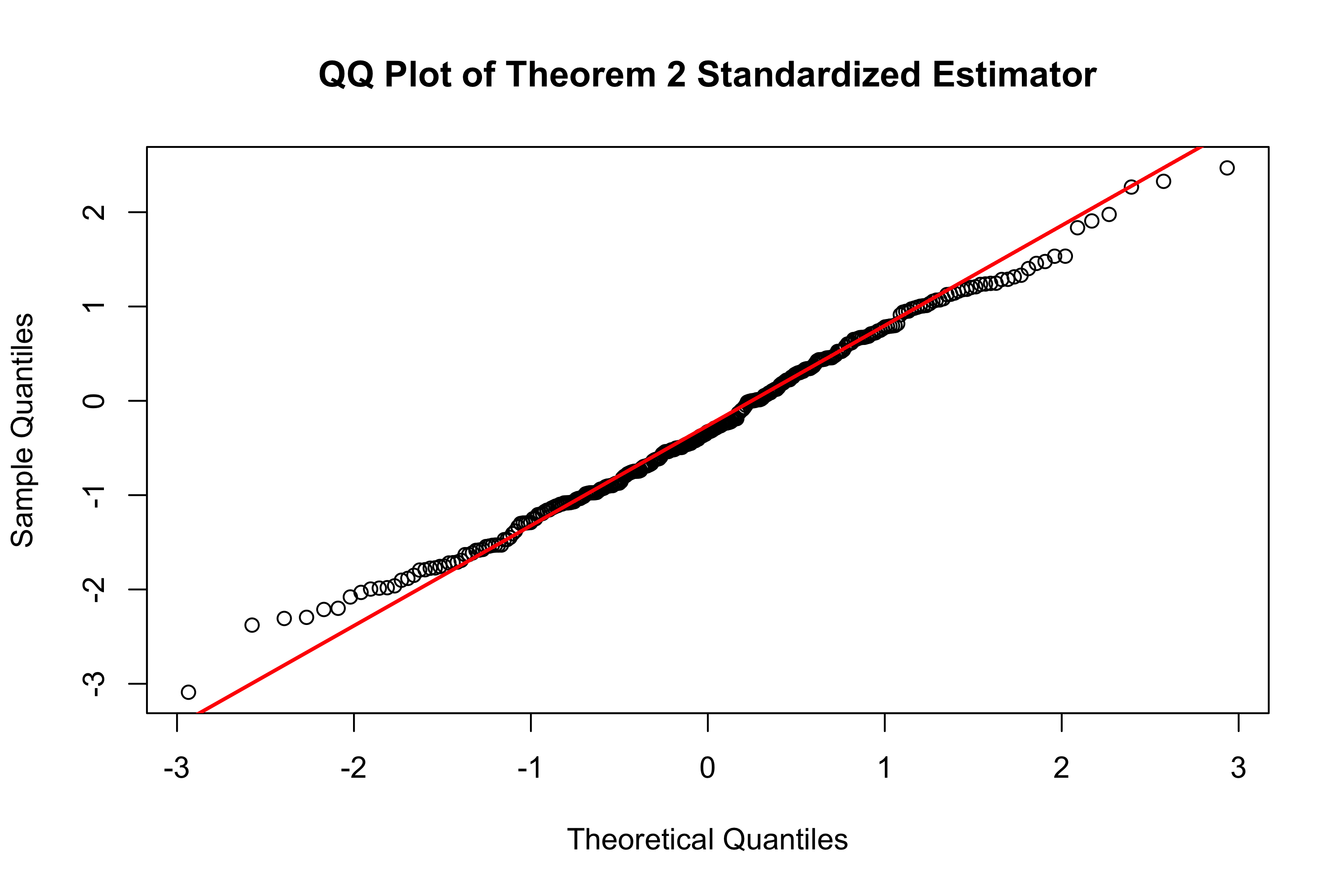}
    \caption{Normal Q--Q plot of the standardized equivalent-weight
estimator $\sqrt{n}(\hat{\theta}_{\mathrm{EW}}-\theta_{\mathrm{EW}})$
across 300 replications.}
    \label{fig:qqplot}
\end{minipage}
\end{figure}

We next evaluate the plug-in variance estimator from
Theorem~2. For asymptotic variance
estimation, we use the unnormalized raw equivalent weights when
computing the gradient term $\hat A$, while the equivalent weights are normalized to sum to n in the weighted-mean component of the influence function. This normalization ensures
that $\hat\theta$ is expressed on the $1/n$ scale required in
Theorem~\ref{thm:AL}. Across the 300 replications, the estimated asymptotic
variance has mean approximately $0.56$ and interquartile range
approximately $[0.53,0.59]$. The empirical Monte Carlo asymptotic
variance is approximately $0.54$, indicating close agreement between
the plug-in variance estimator and the observed sampling variability.

For comparison, we also construct Wald confidence intervals for the
linear equivalent-weight estimator using its standard influence-function
variance estimator. For the linear equivalent-weight estimator, we estimate its asymptotic
variance using the corresponding influence function. Let
$\hat{\boldsymbol{\beta}}$ denote the OLS coefficient estimator,
$\hat{\mathbf Q}=n^{-1}\mathbf X^\top\mathbf X$, and let
$\hat{\mathbf q}$ denote the population mean of the design variables.
The estimated influence function for observation $i$ is
\[
\hat{\psi}_i
=
\mathbf{x}_i^\top
\hat{\mathbf Q}^{-1}
\hat{\mathbf q}
\left(y_i-\mathbf{x}_i^\top\hat{\boldsymbol{\beta}}\right),
\]
giving the plug-in estimator of the asymptotic variance
\[
\widehat{V}_{\mathrm{lin}}
=
\frac{1}{n}
\sum_{i=1}^n
\hat{\psi}_i^2.
\]

Table~\ref{tab:variance_coverage} reports the
empirical Monte Carlo variance, mean estimated variance, mean standard
error, and empirical coverage of nominal $95\%$ confidence intervals.

\begin{table}[H]
\centering
\caption{Variance estimation and confidence-interval performance}
\label{tab:variance_coverage}
\begin{tabular}{lrrrr}
\hline
Method & Empirical $V$ & Mean $\hat V$ & Mean SE
& 95\% Coverage \\
\hline
Linear   & 0.437 & 0.479 & 0.0126 & 97.6\% \\
Logistic & 0.542 & 0.562 & 0.0136 & 94.7\% \\
\hline
\end{tabular}
\end{table}

For the logistic estimator, the mean plug-in variance estimate
($0.562$) is close to the empirical Monte Carlo variance ($0.542$).
The resulting nominal $95\%$ Wald confidence intervals achieve
$94.7\%$ empirical coverage, indicating approximately nominal
finite-sample inference in this simulation. The linear estimator also
shows good coverage at $97.6\%$, although its plug-in variance estimate
is somewhat more conservative in this setting.

\subsection{Applications to FFCWS Study}

We apply the weighting methods to the FFCWS using $n=3,442$ sampled
units representing a population of 1,131,308 individuals. The linear and
logistic equivalent weights are first constructed in their raw form,
without imposing positivity constraints. Because equivalent weights are
interpreted as local sensitivity coefficients rather than sampling
probabilities, negative values are retained in the raw model-based
estimators. For comparisons involving raking, a separate
positivity-adjustment step is required because the raking procedure
requires positive starting weights. The resulting raked weights are
therefore treated as calibrated versions of the raw equivalent weights
rather than as the equivalent weights themselves.

The logistic model uses father interview status as the binary response
for constructing the weights because this variable has no missing values.
This response is used only to construct the weights; the substantive
outcome analyzed below is CPS contact. Recent work using the FFCWS has
documented underreporting of CPS involvement in self-reported measures
and examined its implications for analyses of young-adult outcomes
(Berger et al., \citeyear{BergerEtAl2026}). We use CPS contact here as the substantive outcome
for illustrating the proposed weighting and inference framework. When
raking the model-based weights, we use the same four demographic
variables as in the national FFCWS weights, excluding city. We refer to
the national FFCWS weights as the design-based weights; these weights are
already calibrated in the FFCWS data.

The population-count results in Table~\ref{tab:ffcws_births} show that
the linear and logistic model-based procedures reproduce the overall
population total exactly after normalization, but allocate the total
differently across education categories. These differences illustrate
how the model-based approaches use the observed relationships between
the weighting variables and the model outcome to distribute population
mass across poststratification cells, rather than reproducing the
national design-based allocation exactly.

Table~\ref{tab:ffcws_ess} shows that the raw logistic equivalent weights
have the largest ESS among the model-based methods, with an ESS of
1,435 compared with 1,273 for the linear equivalent weights. After
raking, the logistic weights likewise have the largest ESS among the
raked methods, with an ESS of 1,301 compared with 1,082 for the
raked linear weights, while the raked design-based weights have the
lowest ESS. Thus, in this application, the logistic construction
achieves a relatively large effective sample size both before and
after raking.

\begin{table}[H]
\centering
\caption{Population counts of live births by education}
\label{tab:ffcws_births}
\begin{tabular}{lrrrrrr}
\hline
Education      & $<$8 grade & Some HS & HS & Some College & College+ & Total \\
\hline
Design-raked   & 111,324 & 211,988 & 340,211 & 214,319 & 253,467 & 1,131,309 \\
Linear model   & 73,232  & 267,083 & 367,476 & 242,049 & 181,469 & 1,131,309 \\
Logistic model & 60,214 & 207,301 & 316,517 & 258,784 & 288,491 & 1,131,309 \\
\hline
\end{tabular}
\end{table}

\begin{table}[H]
\centering
\caption{Effective Sample Size (ESS) for FFCWS Weights ($n=3442$)}
\label{tab:ffcws_ess}
\begin{tabular}{lccccc}
\hline
 & Design-based & Linear & Logit & Linear-raked & Logit-raked \\
\hline
ESS & 527 & 1273 & 1435 & 1082 & 1301 \\
\hline
\end{tabular}
\end{table}

Using the logistic equivalent weights, the estimated prevalence of CPS
contact in the FFCWS is $\hat\theta=0.49$ with plug-in variance estimate
$\hat V=0.42$, yielding a 95\% confidence interval of $[0.47,0.51]$
under the proposed variance calculation. This illustrates how the
proposed variance estimator can be applied to a single survey dataset
without bootstrap or replication-based variance estimation. Because the
weighting response and substantive outcome are different variables in
this application, the estimate should be interpreted primarily as an
illustration of the weighting and inference framework rather than as
evidence that father interview status is itself an adequate predictive
model for CPS-contact prevalence.

To examine subgroup behavior, we compare CPS-contact prevalence across
cities in Figure~\ref{fig:ffcws_cps}. The weighting methods produce
different adjustments across cities, with the magnitude of the change
depending on the weighting method and the city. The design-based and
linear weights produce relatively similar patterns across cities,
whereas the logistic weights produce larger adjustments for several
cities. Thus, the application illustrates that the choice of weighting
method can have heterogeneous effects across population subgroups,
even when the overall population estimate is relatively stable.

\begin{figure}[H]
\centering
\includegraphics[width=0.95\textwidth]{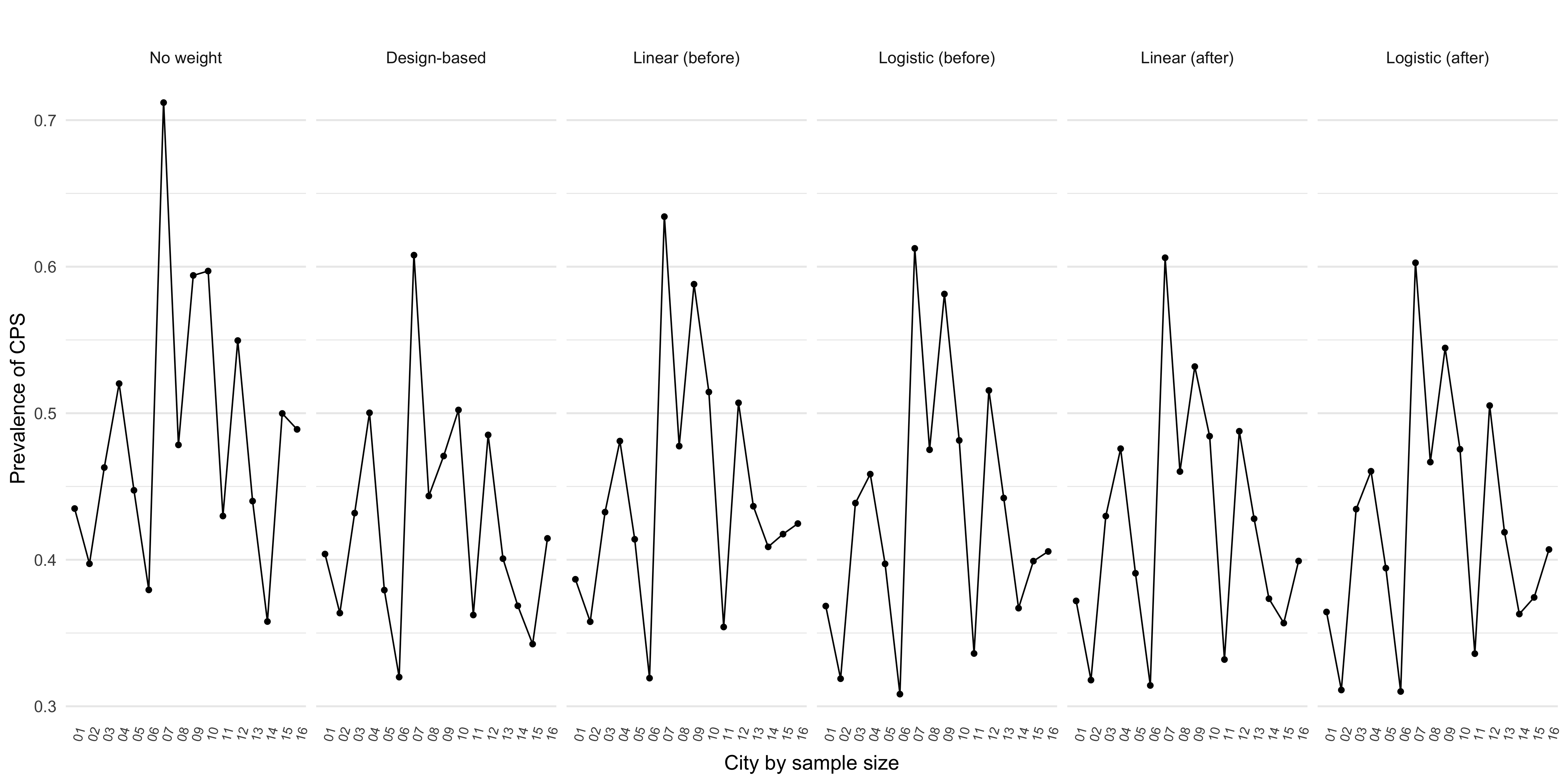}
\caption{Prevalence of Child Protective Services (CPS) contact by city
under alternative weighting methods. Estimates are shown before and
after raking.}
\label{fig:ffcws_cps}
\end{figure}

\section{Discussion}
\subsection{Key Findings}

This study develops a frequentist equivalent-weight construction for
logistic regression under categorical poststratification. The central
contribution is to represent the nonlinear model-based population
estimator locally through weights defined as scaled derivatives with
respect to the observed outcomes. Specifically, starting from the
logistic estimating equations, we use implicit differentiation of the
score equation, followed by the chain rule for the model-based
population estimator, to obtain
\[
\hat w_i
=
n\frac{\partial\hat\theta^{LR}}{\partial y_i}.
\]
This provides a direct interpretation of the equivalent weights as local
sensitivity coefficients and yields a closed-form expression that can
be evaluated from the fitted logistic model and the population cell
structure.

This frequentist development is closely related to the locally
equivalent-weight framework of Giordano et al.~(\citeyear{GiordanoEtAl2026}), whose Bayesian
MrP formulation leads to a corresponding closed-form expression under
categorical poststratification. The present work focuses instead on the
frequentist survey-weighting interpretation and inference, including
effective sample size, asymptotic linearity, plug-in variance estimation,
and finite-sample evaluation.

The simulation results provide several findings regarding the practical
behavior of the proposed construction. First, both linear and logistic
model-based weighting achieve lower regression-coefficient MSE than the
design-based and unweighted approaches in the simulated setting,
although the advantage is not uniform across replications. Thus,
incorporating outcome information through a regression model can improve
regression performance under the data-generating process considered.

For overall population-mean estimation, the linear equivalent-weight
estimator performs best in the reported simulation, with an MSE of
$0.000148$ and essentially no bias. The logistic estimator has a
slightly larger MSE of $0.000194$ and a small bias of $-0.004$, but
remains competitive with the linear estimator. These results do not
support a claim that logistic equivalent weighting uniformly improves
point estimation. Rather, they demonstrate that the nonlinear logistic
construction can provide population estimates comparable to established
weighting approaches while allowing a nonlinear outcome model to be
incorporated into the equivalent-weight framework.

A second useful property of the proposed construction is that the
resulting weights can be examined using effective sample size. In the
simulation, the model-based equivalent weights, particularly the
logistic weights, show relatively stable ESS performance before and
after raking. Before raking, the logistic weights have the largest ESS,
although their ESS is very close to that of the linear equivalent
weights. After raking, the linear and logistic weights remain similar
in ESS, with the linear weights yielding a slightly larger ESS in this
simulation. The cell-based and empirical ESS ratios are also close to
one another, providing finite-sample evidence consistent with the
population ESS limit established theoretically. For the simulated
sample size of $n=3000$, the empirical ESS ratio of approximately
$0.38$ corresponds to an effective sample size of approximately $1,140$.

The simulation also provides evidence supporting the proposed
frequentist inference. The logistic estimator has an empirical Monte
Carlo variance of $0.542$ compared with a mean plug-in variance estimate
of $0.562$. The resulting Wald confidence intervals achieve $94.7\%$
empirical coverage for a nominal $95\%$ confidence level. Together with
the approximately Gaussian behavior of the centered and scaled
estimator, these results suggest that the asymptotic linear
representation and plug-in variance estimator provide a useful
finite-sample approximation at the sample size considered.

The city-level analysis provides an important complement to the overall
population-mean comparison. Although the linear estimator has the
smallest overall MSE in the simulation, the city-level results suggest
that model-based weighting can provide useful local population recovery
across heterogeneous subgroups. The logistic estimator often places the
center of the city-level sampling distribution close to the corresponding
population prevalence, but this improvement in local recovery can be
accompanied by greater between-sample variability in some cities. This
suggests a local bias--variance tradeoff: the flexibility of the
nonlinear logistic model can improve recovery for particular subgroups,
while the resulting equivalent weights may also be more sensitive to
the realized sample composition within those subgroups. Raking generally
reduces this variability and brings the city-level estimates into closer
alignment with their population benchmarks.

\subsection{Limitations}

Several limitations should be emphasized. First, the proposed logistic
equivalent weights provide a local representation of a nonlinear
model-based population estimator rather than an exact global
representation by a fixed weighted sample mean. The weights therefore
describe first-order sensitivity at the observed data and may not
characterize the behavior of the estimator under large changes in the
outcomes. The local nature of the representation is intrinsic to the
nonlinear logistic model rather than a consequence of an estimation
approximation that can simply be removed.

Second, the equivalent weights can be negative or highly variable.
Negative values are mathematically permissible because the weights are
local sensitivity coefficients rather than sampling probabilities.
However, such weights can complicate their use in procedures designed
for conventional positive survey weights, including some calibration
and raking procedures. In particular, positivity constraints may require
a separate calibration or transformation step if the equivalent weights
are to be used as starting weights for such procedures. Any such
transformation defines a different estimator and is not covered by the
theoretical results developed here. The present theory therefore concerns
the raw equivalent weights and should not be interpreted as providing
inference for arbitrary positivity-constrained or calibrated versions.

Third, the validity of the logistic construction depends on the
specification of the outcome model and on the choice of weighting
variables. The simulations considered here do not provide a comprehensive
assessment of model misspecification, omitted interactions, nonlinear
effects, or sparse poststratification cells. In settings where the
logistic model poorly describes the relationship between the outcome and
the weighting variables, the resulting population estimator and its
equivalent weights may perform differently from the results reported
here. A systematic comparison under model misspecification is therefore
an important direction for future work.

Fourth, the asymptotic theory is developed under a finite-cell
superpopulation framework rather than a general design-based theory for
complex surveys. In particular, the current results do not explicitly
account for clustering, unequal-probability multistage sampling, or
design effects arising from complex sample designs. Extending the
theoretical results to these settings would broaden the applicability
of the proposed construction.

Fifth, the method requires population information for the
poststratification cells. The quality of the resulting population
estimator therefore depends on the availability and accuracy of the
population cell counts and on the choice of auxiliary variables. The
method does not automatically incorporate auxiliary information that is
not represented in the specified poststratification structure.

Finally, the FFCWS application should be interpreted primarily as an
illustration of the weighting and variance-estimation framework. The
logistic weights are constructed using father interview status, whereas
the substantive outcome is CPS contact. Consequently, the application
does not by itself establish that the weighting model is optimal for
estimating CPS-contact prevalence. Rather, it demonstrates how the
proposed equivalent weights and plug-in variance estimator can be
implemented in an empirical survey setting.

\subsection{Future Research}

Several extensions are natural. First, the theoretical results could be generalized to complex survey designs, including stratified multistage sampling, clustering, and unequal-probability sampling. Second, a systematic study of model misspecification could clarify when logistic weighting provides an advantage over linear or design-based approaches. Third, methods for stabilizing extreme and negative weights could be developed with explicit calibration and inferential guarantees.

A further direction is to incorporate richer population information through hierarchical and nonparametric models. In particular, Gaussian-process and hierarchical Bayesian approaches can provide flexible representations of population-level relationships while allowing uncertainty about sparse cells to be propagated through the weighting procedure. Such extensions provide a natural direction for future work.

\appendix

\section{Supplementary One-Unit Perturbation Analysis}
\label{app:perturbation}

This appendix provides a supplementary numerical assessment of the
sensitivity of the logistic equivalent weights to individual outcome
perturbations. This analysis is not required for the derivation of the
equivalent weights in Section~2 or for the asymptotic results. Instead,
it examines the extent to which the fitted weights may change when a
single observed binary outcome is perturbed.

\subsection{One-Unit Perturbation Analysis}

Let $y^{(-i)}$ denote the response vector obtained by changing an
observed value $y_i=1$ to $y_i=0$, while leaving all other outcomes
unchanged. Equivalently,
\[
y^{(-i)}=y-e_i,
\]
where $e_i=(0,\ldots,0,1,0,\ldots,0)^\top$ has its $1$ in position $i$.

For each perturbation, the logistic regression model is re-estimated
using $y^{(-i)}$, and the equivalent weights are recomputed. Denote the
resulting weight vector by
\[
\hat w^{(-i)}=\hat w(y^{(-i)}),
\]
while $\hat w=\hat w(y)$ denotes the weights obtained from the original
observed response vector.

A first-order finite-difference approximation to the derivative of
the weight vector with respect to $y_i$ is
\begin{equation}
\frac{\partial\hat w(y)}{\partial y_i}
\approx
\frac{\hat w(y^{(-i)})-\hat w(y)}
{y_i^{(-i)}-y_i}
=
\hat w(y)-\hat w(y^{(-i)}),
\label{eq:app-weight-derivative}
\end{equation}
because $y_i^{(-i)}-y_i=-1$. Thus, the change in the fitted weight
vector resulting from changing one observed outcome from $1$ to $0$
provides a direct numerical approximation to its local outcome
sensitivity.

Because $y$ is binary, the Jacobian--vector product can be written as
\[
(\nabla\hat w)y
=
\sum_{i:y_i=1}
\frac{\partial\hat w(y)}{\partial y_i}.
\]

Using the one-unit perturbations in \eqref{eq:app-weight-derivative}
gives the finite-difference approximation
\begin{equation}
(\nabla\hat w)y
\approx
\sum_{i:y_i=1}
\left\{
\hat w(y)-\hat w(y^{(-i)})
\right\}.
\label{eq:app-finite-difference-correction}
\end{equation}

This calculation provides a direct numerical assessment of the
magnitude of the outcome-dependent component of the fitted weights.

\subsection{Empirical Sensitivity Assessment}

Across the simulation experiments, the outcome-dependent correction term $(\nabla\hat w)y$ was extremely small. The observed values ranged from approximately
$-1.4\times10^{-4}$ to $4.2\times10^{-5}$, with mean approximately
$-4.0\times10^{-6}$ and median approximately $-2.4\times10^{-6}$.

In a separate one-unit perturbation check, changing a single binary
outcome altered
\[
(\nabla\hat w)y+\hat w
\]
by at most approximately $0.02$, with a median change of approximately
$0.002$.

These calculations provide a numerical sensitivity check for the
settings considered. They suggest that the indirect dependence of the
fitted equivalent weights on individual outcomes through the fitted
logistic model is small in the simulation and empirical settings
examined. They are not intended to establish that outcome dependence
is asymptotically negligible under all sampling designs or model
configurations, and they are not required for the derivation of the
equivalent weights.

\section{Proofs}

\subsection{Proof of Lemma~\ref{lem:weight-regularity}}

\begin{proof}

Recall that
\[
w^{\mathrm{raw}}(x;\beta,\tilde\pi)
=
x^\top
Q(\beta,\tilde\pi)^{-1}
C(\beta),
\]
where
\[
Q(\beta,\tilde\pi)
=
\sum_{s=1}^S
\tilde\pi_s
v_s(\beta)x_sx_s^\top
\]
and
\[
C(\beta)
=
\sum_{s=1}^S
q_s v_s(\beta)x_s.
\]

Because the number of poststratification cells is finite and
$\mathcal X=\{x_1,\ldots,x_S\}$ is finite,
there exists a finite constant $M_x$ such that
\[
\max_{1\le s\le S}\|x_s\|\le M_x.
\]

Moreover,
\[
0<v_s(\beta)
=
\sigma(x_s^\top\beta)
\{1-\sigma(x_s^\top\beta)\}
\le \frac14.
\]
Since $\mathcal N$ and $\Pi$ are fixed neighborhoods and the cell
vectors and target proportions are fixed, the vector $C(\beta)$ is
uniformly bounded on $\mathcal N$. Thus, for some finite constant
$C_C$,
\[
\sup_{\beta\in\mathcal N}\|C(\beta)\|
\le C_C.
\]

By assumption,
\[
\lambda_{\min}\{Q(\beta,\tilde\pi)\}\ge c>0
\]
for every $(\beta,\tilde\pi)\in\mathcal N\times\Pi$. Since
$Q(\beta,\tilde\pi)$ is symmetric positive definite,
\[
\|Q(\beta,\tilde\pi)^{-1}\|_{\mathrm{op}}
=
\frac{1}{
\lambda_{\min}\{Q(\beta,\tilde\pi)\}
}
\le \frac1c.
\]
Consequently,
\[
\begin{aligned}
\left|
w^{\mathrm{raw}}(x;\beta,\tilde\pi)
\right|
&\le
\|x\|\,
\|Q(\beta,\tilde\pi)^{-1}\|\,
\|C(\beta)\| \\
&\le
M_xc^{-1}C_C.
\end{aligned}
\]
Hence there exists $C<\infty$ such that
\[
\sup_{(\beta,\tilde\pi)\in\mathcal N\times\Pi}
\max_{x\in\mathcal X}
\left|
w^{\mathrm{raw}}(x;\beta,\tilde\pi)
\right|
\le C.
\]

This proves the uniform envelope.

For continuity, the logistic function is continuously differentiable,
and hence $v_s(\beta)$ is continuous in $\beta$. Therefore
$Q(\beta,\tilde\pi)$ and $C(\beta)$ are continuous in
$(\beta,\tilde\pi)$. Since $Q(\beta,\tilde\pi)$ remains uniformly
nonsingular on $\mathcal N\times\Pi$, the matrix inverse is continuous.
It follows that
\[
(\beta,\tilde\pi)
\mapsto
w^{\mathrm{raw}}(x;\beta,\tilde\pi)
\]
is continuous for every $x\in\mathcal X$.

Finally, since
\[
\hat\pi_s
=
\frac1n\sum_{i=1}^n I(X_i=x_s)
\xrightarrow{p}\pi_s
\]
for every $s$, and there are finitely many cells,
\[
\hat\pi\xrightarrow{p}\pi.
\]
Thus the empirical weights satisfy the stated $O_p(1)$ envelope
uniformly in $\beta$.

\end{proof}

\subsection{Proof of Theorem~\ref{thm:plugin-LLN}}

\begin{proof}

For notational simplicity, write
\[
w_i(\beta,\tilde\pi)
=
w^{\mathrm{raw}}(X_i;\beta,\tilde\pi).
\]

Because the number of cells is finite,
\[
\hat\pi\xrightarrow{p}\pi.
\]
By Lemma~\ref{lem:weight-regularity} and the consistency assumption
\[
\hat\beta\xrightarrow{p}\beta_0,
\]
we have
\[
\sup_{1\le i\le n}
\left|
w_i(\hat\beta,\hat\pi)
-
w_i(\beta_0,\pi)
\right|
\xrightarrow{p}0.
\]

It follows that
\[
\frac1n\sum_{i=1}^n
w_i(\hat\beta,\hat\pi)
-
\frac1n\sum_{i=1}^n
w_i(\beta_0,\pi)
\xrightarrow{p}0.
\]
By the ordinary law of large numbers,
\[
\frac1n\sum_{i=1}^n
w_i(\beta_0,\pi)
\xrightarrow{p}
E\left[
w^{\mathrm{raw}}(X;\beta_0,\pi)
\right].
\]
Therefore,
\[
\bar w_n(\hat\beta,\hat\pi)
\xrightarrow{p}
E\left[
w^{\mathrm{raw}}(X;\beta_0,\pi)
\right].
\]

The same argument applies to the squared weights. By the uniform
envelope in Lemma~\ref{lem:weight-regularity},
the squared weights are uniformly bounded, so
\[
\overline{w^2}_n(\hat\beta,\hat\pi)
\xrightarrow{p}
E\left[
w^{\mathrm{raw}}(X;\beta_0,\pi)^2
\right].
\]

Consequently,
\[
\frac{\mathrm{ESS}_n}{n}
=
\frac{
\bar w_n(\hat\beta,\hat\pi)^2
}{
\overline{w^2}_n(\hat\beta,\hat\pi)
}
\]
converges by the continuous mapping theorem to
\[
\frac{
\left[
E\{w^{\mathrm{raw}}(X;\beta_0,\pi)\}
\right]^2
}{
E\left[
w^{\mathrm{raw}}(X;\beta_0,\pi)^2
\right]
}.
\]

\end{proof}

\subsection{Proof of Lemma~\ref{lem:weight-differentiability}}

\begin{proof}

For notational simplicity, write
\[
Q=Q(\beta,\tilde\pi),
\qquad
C=C(\beta).
\]

The logistic variance function
\[
v_s(\beta)
=
\sigma(x_s^\top\beta)
\{1-\sigma(x_s^\top\beta)\}
\]
is continuously differentiable. Its gradient is
\[
\nabla_\beta v_s(\beta)
=
v_s(\beta)
\{1-2\sigma(x_s^\top\beta)\}
x_s.
\]
Since $\mathcal X$ is finite and $v_s(\beta)\le1/4$,
these derivatives are uniformly bounded on $\mathcal N$.

Therefore $Q(\beta,\tilde\pi)$ and $C(\beta)$ are continuously
differentiable in $(\beta,\tilde\pi)$.

Because $Q(\beta,\tilde\pi)$ is uniformly nonsingular,
the matrix inverse is continuously differentiable, with derivative
\[
D(Q^{-1})[\Delta]
=
-Q^{-1}\Delta Q^{-1}.
\]

For a direction $h$ in $\beta$,
\[
DQ(\beta,\tilde\pi)[h]
=
\sum_{s=1}^S
\tilde\pi_s
\{\nabla_\beta v_s(\beta)^\top h\}
x_sx_s^\top,
\]
and
\[
DC(\beta)[h]
=
\sum_{s=1}^S
q_s
\{\nabla_\beta v_s(\beta)^\top h\}
x_s.
\]
Because there are finitely many cells and all cell vectors and
probabilities are bounded, there exists a finite constant $C_\beta$
such that
\[
\sup_{(\beta,\tilde\pi)\in\mathcal N\times\Pi}
\left\|
\nabla_\beta
w^{\mathrm{raw}}(x;\beta,\tilde\pi)
\right\|
\le C_\beta.
\]

Similarly, because $Q$ is linear in $\tilde\pi$,
for a direction $h_\pi$,
\[
DQ(\beta,\tilde\pi)[h_\pi]
=
\sum_{s=1}^S
h_{\pi,s}
v_s(\beta)x_sx_s^\top,
\]
while $C(\beta)$ does not depend on $\tilde\pi$. Hence
\[
D w^{\mathrm{raw}}(x;\beta,\tilde\pi)[h_\pi]
=
-x^\top
Q^{-1}
DQ[h_\pi]
Q^{-1}
C.
\]
Again, finite-dimensionality, bounded cell vectors, and the uniform
bound on $\|Q^{-1}\|$ imply that there exists a finite constant
$C_\pi$ such that
\[
\sup_{(\beta,\tilde\pi)\in\mathcal N\times\Pi}
\left\|
\nabla_\pi
w^{\mathrm{raw}}(x;\beta,\tilde\pi)
\right\|
\le C_\pi.
\]

Thus the population-limit raw equivalent weight is continuously
differentiable in both $\beta$ and $\tilde\pi$.

Finally, define
\[
T(\beta,\tilde\pi)
=
\frac{
P[w^{\mathrm{raw}}(X;\beta,\tilde\pi)Y]
}{
P[w^{\mathrm{raw}}(X;\beta,\tilde\pi)]
}.
\]
The denominator is bounded away from zero in a sufficiently small
neighborhood of $(\beta_0,\pi)$ because the limiting equivalent
weights have a nonzero population mean. Therefore the quotient rule
implies that $T(\beta,\tilde\pi)$ is continuously differentiable in
both arguments.

\end{proof}

\subsection{Proof of Theorem~\ref{thm:AL}}

\begin{proof}

For clarity, write
\[
w_0(X)
=
w^{\mathrm{raw}}(X;\beta_0,\pi),
\]
and define
\[
a_0=P[w_0(X)]
=
E[w_0(X)].
\]
By definition,
\[
\theta_{\mathrm{EW}}
=
\frac{P[w_0(X)Y]}{a_0}.
\]

Also define, for fixed $P$,
\[
T_P(\beta,\tilde\pi)
=
\frac{
P[
w^{\mathrm{raw}}(X;\beta,\tilde\pi)Y
]
}{
P[
w^{\mathrm{raw}}(X;\beta,\tilde\pi)
]
}.
\]
Thus
\[
T_P(\beta_0,\pi)=\theta_{\mathrm{EW}}.
\]

The sample estimator can be written as
\[
\hat\theta_{\mathrm{EW}}
=
\frac{
P_n[
w^{\mathrm{raw}}(X;\hat\beta,\hat\pi)Y
]
}{
P_n[
w^{\mathrm{raw}}(X;\hat\beta,\hat\pi)
]
}.
\]

We first separate the ordinary empirical-process contribution from
the plug-in contribution:
\begin{align}
\hat\theta_{\mathrm{EW}}-\theta_{\mathrm{EW}}
={}&
\left\{
\frac{P_n[w_0Y]}{P_n[w_0]}
-
\frac{P[w_0Y]}{P[w_0]}
\right\}
\nonumber\\
&+
\left\{
T_P(\hat\beta,\hat\pi)
-
T_P(\beta_0,\pi)
\right\}
+
R_n,
\label{eq:proof-decomposition}
\end{align}
where $R_n$ is the empirical-process remainder caused by replacing
$(\beta_0,\pi)$ with $(\hat\beta,\hat\pi)$ inside the empirical
weighted mean.

\paragraph{Step 1: Direct sampling variation.}

Consider the first term in \eqref{eq:proof-decomposition}. Since
\[
\frac{P_n[w_0Y]}{P_n[w_0]}
-
\frac{P[w_0Y]}{P[w_0]}
=
\frac{
(P_n-P)[w_0(Y-\theta_{\mathrm{EW}})]
}{
P_n[w_0]
},
\]
and
\[
P_n[w_0]\xrightarrow{p}a_0,
\]
the law of large numbers and the central limit theorem give
\[
\sqrt n
\left\{
\frac{P_n[w_0Y]}{P_n[w_0]}
-
\theta_{\mathrm{EW}}
\right\}
=
\frac1{\sqrt n}
\sum_{i=1}^n
\frac{w_0(X_i)}{a_0}
\{Y_i-\theta_{\mathrm{EW}}\}
+
o_p(1).
\]
Thus the direct sampling contribution to the influence function is
\[
\psi_{\mathrm{dir}}(Z)
=
\frac{
w^{\mathrm{raw}}(X;\beta_0,\pi)
}{
a_0
}
\{Y-\theta_{\mathrm{EW}}\}.
\]

\paragraph{Step 2: Effect of estimating $\beta$.}

By Lemma~\ref{lem:weight-differentiability},
$T_P(\beta,\tilde\pi)$ is continuously differentiable. Define
\[
B_\beta
=
\left.
\frac{\partial T_P(\beta,\pi)}
{\partial\beta}
\right|_{\beta=\beta_0}.
\]
A first-order Taylor expansion gives
\[
T_P(\hat\beta,\pi)
-
T_P(\beta_0,\pi)
=
B_\beta^\top
(\hat\beta-\beta_0)
+
o_p(n^{-1/2}),
\]
because the logistic MLE satisfies
\[
\hat\beta-\beta_0=O_p(n^{-1/2}).
\]

Under the stated logistic regularity conditions,
\[
\sqrt n(\hat\beta-\beta_0)
=
\frac1{\sqrt n}
\sum_{i=1}^n
\mathcal I(\beta_0)^{-1}
X_i
\{Y_i-p_0(X_i)\}
+
o_p(1).
\]
Therefore,
\[
B_\beta^\top
\sqrt n(\hat\beta-\beta_0)
=
\frac1{\sqrt n}
\sum_{i=1}^n
B_\beta^\top
\mathcal I(\beta_0)^{-1}
X_i
\{Y_i-p_0(X_i)\}
+
o_p(1).
\]

\paragraph{Step 3: Effect of estimating the cell probabilities.}

The finite-sample weights depend on the empirical cell proportions
\[
\hat\pi_s
=
\frac1n
\sum_{i=1}^n
I(X_i=x_s).
\]
Define the cell-indicator vector
\[
e(X)
=
\left(
I(X=x_1),\ldots,I(X=x_S)
\right)^\top.
\]
Then
\[
\hat\pi
=
P_n e(X),
\]
and hence
\[
\sqrt n(\hat\pi-\pi)
=
\frac1{\sqrt n}
\sum_{i=1}^n
\{e(X_i)-\pi\}.
\]

Define
\[
B_\pi
=
\left.
\frac{\partial T_P(\beta_0,\tilde\pi)}
{\partial\tilde\pi}
\right|_{\tilde\pi=\pi}.
\]
A first-order Taylor expansion gives
\[
T_P(\beta_0,\hat\pi)
-
T_P(\beta_0,\pi)
=
B_\pi^\top
(\hat\pi-\pi)
+
o_p(n^{-1/2}).
\]
Consequently,
\[
B_\pi^\top
\sqrt n(\hat\pi-\pi)
=
\frac1{\sqrt n}
\sum_{i=1}^n
B_\pi^\top
\{e(X_i)-\pi\}
+
o_p(1).
\]

\paragraph{Step 4: Empirical-process remainder.}

It remains to show that replacing $(\beta_0,\pi)$ by
$(\hat\beta,\hat\pi)$ inside the empirical weighted mean does not
produce an additional first-order term.

By Lemma~\ref{lem:weight-regularity}, the class
\[
\mathcal F
=
\left\{
w^{\mathrm{raw}}(\cdot;\beta,\tilde\pi)Y:
(\beta,\tilde\pi)\in\mathcal N\times\Pi
\right\}
\]
is uniformly bounded.

By Lemma~\ref{lem:weight-differentiability}, the map
$(\beta,\tilde\pi)\mapsto
w^{\mathrm{raw}}(X;\beta,\tilde\pi)$ is uniformly Lipschitz on the
finite-dimensional compact parameter space
$\mathcal N\times\Pi$. Hence $\mathcal F$ is a finite-dimensional
parametric class with a bounded envelope and is $P$-Donsker.

Since
\[
(\hat\beta,\hat\pi)
\xrightarrow{p}
(\beta_0,\pi),
\]
stochastic equicontinuity gives
\[
\sqrt n(P_n-P)
\left[
w^{\mathrm{raw}}(X;\hat\beta,\hat\pi)Y
-
w^{\mathrm{raw}}(X;\beta_0,\pi)Y
\right]
=
o_p(1).
\]
The same argument applies to the denominator. Thus
\[
R_n=o_p(n^{-1/2}).
\]

\paragraph{Step 5: Combine the contributions.}

Combining Steps 1--4 yields
\[
\sqrt n
(\hat\theta_{\mathrm{EW}}-\theta_{\mathrm{EW}})
=
\frac1{\sqrt n}
\sum_{i=1}^n
\psi(Z_i)
+
o_p(1),
\]
where
\[
\psi(Z)
=
\frac{
w^{\mathrm{raw}}(X;\beta_0,\pi)
}{
a_0
}
\{Y-\theta_{\mathrm{EW}}\}
+
B_\beta^\top
\mathcal I(\beta_0)^{-1}
X
\{Y-p_0(X)\}
+
B_\pi^\top
\{e(X)-\pi\}.
\]

Each term has finite second moment under the stated assumptions.
Therefore, by the multivariate central limit theorem,
\[
\sqrt n
(\hat\theta_{\mathrm{EW}}-\theta_{\mathrm{EW}})
\Rightarrow
N(0,V),
\]
where
\[
V=E[\psi(Z)^2].
\]

Finally, the plug-in estimators
$\hat B_\beta$, $\hat B_\pi$, $\hat{\mathcal I}$, and the normalized
equivalent weights are consistent by the preceding regularity results
and the law of large numbers. Hence the estimated influence
contributions converge in $L_2$ to $\psi(Z)$, and therefore
\[
\widehat V
=
\frac1n
\sum_{i=1}^n
\hat\psi_i^2
\xrightarrow{p}
V.
\]

\end{proof}

\section{Data and Code Availability}

Selected R code used to reproduce the simulations and empirical analyses
in this paper is publicly available at \url{https://github.com/huneel777/logit_equivalent_weight}.

The empirical application uses data from the Future of Families and
Child Wellbeing Study (FFCWS). Public-use data and documentation are
available at
\href{https://ffcws.princeton.edu/documentation}
{https://ffcws.princeton.edu/documentation}.

\begingroup

\setlength{\bibsep}{3pt}

\endgroup

\end{document}